\tikzset{main node/.style={circle,fill=blue!20,draw,minimum size=0.2cm,inner sep=0pt},}
\newcommand{\Z}{\mathbb{Z}}
\newcommand{\ID}{\gamma^{ID}}
\newcommand{\LD}{\gamma^{LD}}
\newcommand{\SLD}{\gamma^{SLD}}
\newcommand{\GSLD}{\gamma^{DLD}}
\begin{document}

\tikzset{square matrix/.style={
    matrix of nodes,
    column sep=-\pgflinewidth, row sep=-\pgflinewidth,
    nodes={draw,
      minimum height=#1,
      anchor=center,
      text width=#1,
      align=center,
      inner sep=0pt
    },
  },
  square matrix/.default=0.5cm
}

\tikzset{big square matrix/.style={
    matrix of nodes,
    column sep=-\pgflinewidth, row sep=-\pgflinewidth,
    nodes={draw,
      minimum height=#1,
      anchor=center,
      text width=28,
      align=center,
      inner sep=0pt
    },
  },
  big square matrix/.default=0.88cm
}

\title{New Optimal Results on Codes for Location in Graphs\thanks{An extended abstract~\cite{JLLnrclg} of the paper has been presented at the Fifth Russian Finnish Symposium on Discrete Mathematics.}\thanks{Research supported partially by the Research Council of Finland grants 338797 and 358718.}}


\author{Ville Junnila\\
Department of Mathematics and Statistics \\
University of Turku \\ Turku
FI-20014, Finland \\
viljun@utu.fi \\
\and Tero Laihonen\\
Department of Mathematics and Statistics \\
University of Turku \\ Turku
FI-20014, Finland \\
terolai@utu.fi
 \\
\and Tuomo Lehtil{\"a}\thanks{Research supported by the University of Turku Graduate School (UTUGS), the Vilho, Yrj{\"o} and Kalle V{\"a}is{\"a}l{\"a} Foundation, and the Jenny and Antti Wihuri Foundation.}\corresponding \\
Department of Mathematics and Statistics \\
University of Turku \\ Turku
FI-20014, Finland \\
tualeh@utu.fi
}


\maketitle

\runninghead{V. Junnila, T. Laihonen, T. Lehtil{\"a}}{New Optimal Results on Codes for Location in Graphs}

\begin{abstract}
In this paper, we broaden the understanding of the recently introduced concepts of solid-locating-dominating and self-locating-dominating codes in various graphs. In particular, we present the optimal, i.e., smallest possible, codes in the infinite triangular and king grids.  
Furthermore, we give optimal locating-dominating, self-locating-dominating and solid-locating-dominating codes in the direct product $K_n\times K_m$ of complete graphs. We also present optimal solid-locating-dominating codes for the Hamming graphs $K_q\square K_q\square K_q$ with $q\geq2$.
\end{abstract}

\begin{keywords}
Location-domination, solid-location-domination, self-location-domination, king grid, direct product, Hamming graph
\end{keywords}

\section{Introduction}

Sensor networks consist of sensors monitoring various places and connections between these places (see~\cite{lowww}). A sensor network is modeled as a simple and undirected graph $G=(V(G),E(G))=(V,E)$. In this context, a sensor can be placed on a vertex $v$ and its closed neighbourhood $N[v]$ represents the set of locations that the sensor monitors. Besides assuming that graphs are simple and undirected, we also assume that they are connected and have cardinality at least two. In the following, we present some terminology and notation. The \textit{closed neighbourhood} of $v$ is defined as $N[v]=N(v)\cup \{v\}$, where $N(v)$ is the \textit{open neighbourhood} of $v$, that is, the set of vertices adjacent to $v$. A \textit{code} $C$ is a nonempty subset of $V$ and its elements are \textit{codewords}. The codeword $c\in C$ \emph{covers} a vertex $v\in V$ if $v\in N[c]$. We denote the set of codewords covering $v$ in $G$ by $$I(G,C;v) = I(G;v) = I(C;v)=I(v)=N[v]\cap C \text{.}$$ The set $I(v)$ is called an \textit{identifying set} or an $I$\emph{-set}. We say that a code $C\subseteq V$ is \textit{dominating} in $G$ if $I(C;u)\neq \emptyset$ for all $u\in V$. If the sensors are placed at the locations corresponding to the codewords, then each vertex is monitored by the sensors located in $I(v)$. More explanation regarding location detection in the sensor networks can be found in~\cite{Trachtenberg,LT:disj,Ray}.

Let us now define \textit{identifying codes}, which were first introduced by Karpovsky \textit{et al.} in~\cite{kcl}. For numerous papers regarding identifying codes and related topics, the interested reader is referred to the online bibliography~\cite{lowww}.
\begin{definition}
A code $C \subseteq V$ is \emph{identifying} in $G$ if for all distinct $u, v \in V$ we have $I(C;u) \neq \emptyset$ and
\[
I(C;u) \neq I(C;v) \text{.}
\]
An identifying code $C$ in a finite graph $G$ with the smallest cardinality is called \emph{optimal} and the number of codewords in an optimal identifying code is denoted by $\ID(G)$.
\end{definition}

Identifying codes require unique $I$-sets for codewords as well as for non-codewords. However, if we omit the requirement of unique $I$-sets for codewords, then we obtain the following definition of \emph{locating-dominating codes}, which was first introduced by Slater in~\cite{RS:LDnumber,S:DomLocAcyclic,S:DomandRef}.
\begin{definition}
A code $C \subseteq V$ is \emph{locating-dominating} in $G$ if for all distinct $u, v \in V \setminus C$ we have $I(C;u) \neq \emptyset$ and
\[
I(C;u) \neq I(C;v) \text{.}
\]
\end{definition}

Notice that an identifying code in $G$ is also locating-dominating (by the definitions). In \cite{JLLrntcld}, self-locating-dominating and solid-locating-dominating codes have been introduced and, in \cite{CubeCon,SLDDLDgraafit}, they have been further studied. The definitions of these codes are given as follows.
\begin{definition} Let $C \subseteq V$ be a code in $G$.
\begin{itemize}
\item[(i)] We say that $C$ is a \emph{self-locating-dominating code} in $G$ if for all $u \in V \setminus C$ we have $I(C;u) \neq \emptyset$ and
    \[
    \bigcap_{c \in I(C;u)} N[c] = \{u\} \text{.}
    \]
\item[(ii)] We say that $C$ is a \emph{solid-locating-dominating code} in $G$ if for all distinct $u, v \in V \setminus C$ we have
    \[
    I(C;u) \setminus I(C;v) \neq \emptyset \text{.}
    \]
\end{itemize}
\end{definition}

Observe that since $G$ is a connected graph on at least two vertices, a self-locating-dominating and solid-locating-dominating code is always dominating. Analogously to identifying codes, in a finite graph $G$, we say that dominating, locating-dominating, self-locating-dominating and solid-locating-dominating codes with the smallest cardinalities are \emph{optimal} and we denote the cardinality of an optimal code by $\gamma(G)$, $\LD(G), \SLD(G)$ and $\GSLD(G)$, respectively.

In the following theorem, we offer characterizations of self-locating-dominating and solid-locat\-ing-dominating codes for easier comparison of them.
\begin{theorem}[\cite{JLLrntcld}] \label{ThmCharacterizationSLDILD}
Let $G=(V,E)$ be a connected graph on at least two vertices:
\begin{itemize}
\item[(i)] A code $C \subseteq V$ is self-locating-dominating if and only if for all distinct $u \in V \setminus C$ and $v \in V$ we have
    \[
    I(C;u) \setminus I(C;v) \neq \emptyset \text{.}
    \]
\item[(ii)] A code $C \subseteq V$ is solid-locating-dominating if and only if for all $u \in V \setminus C$ we have $I(C;u)\neq \emptyset$ and
    \[
    \left( \bigcap_{c \in I(C;u)} N[c] \right) \setminus C = \{u\} \text{.}
    \]
\end{itemize}
\end{theorem}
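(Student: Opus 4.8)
I would derive both equivalences from one elementary observation about $I$-sets, so the first step is to record it. For a code $C$ and vertices $u,w\in V$ with $I(C;u)\neq\emptyset$, I claim
\[
w\in\bigcap_{c\in I(C;u)}N[c]\iff I(C;u)\subseteq I(C;w)\iff I(C;u)\setminus I(C;w)=\emptyset .
\]
The first equivalence is just unwinding definitions together with the symmetry of adjacency: $c\in I(C;u)$ means $c\in C$ and $u\in N[c]$, i.e.\ $c\in N[u]\cap C$; and ``$w\in N[c]$ for such a $c$'' is the same as $c\in N[w]\cap C=I(C;w)$. I would also note that $u$ itself always lies in $\bigcap_{c\in I(C;u)}N[c]$ when that index set is nonempty, since $c\in N[u]$ gives $u\in N[c]$. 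Taking the contrapositive, $I(C;u)\setminus I(C;w)\neq\emptyset$ holds exactly when $w\notin\bigcap_{c\in I(C;u)}N[c]$.

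For part (i): first I would check that the displayed right-hand condition already subsumes domination. Fixing $u\in V\setminus C$ and using that $G$ has at least two vertices, pick any $v\neq u$; then $I(C;u)\setminus I(C;v)\neq\emptyset$ forces $I(C;u)\neq\emptyset$. Now for $u\in V\setminus C$ with $I(C;u)\neq\emptyset$, the observation shows that $\bigcap_{c\in I(C;u)}N[c]=\{u\}$ holds precisely when no vertex $v\neq u$ belongs to that intersection, i.e.\ $I(C;u)\setminus I(C;v)\neq\emptyset$ for every $v\in V$, $v\neq u$. Quantifying over $u\in V\setminus C$, the left side is the definition of self-locating-domination and the right side is the stated characterization, so both directions fall out.

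For part (ii): the argument is the same, except the competitor vertex $w$ is restricted to $V\setminus C$. I would first argue that solid-locating-domination forces $I(C;u)\neq\emptyset$ for every $u\in V\setminus C$ (for $|V\setminus C|\ge 2$ by choosing a second non-codeword as competitor, and for $|V\setminus C|\le 1$ directly from connectivity), so that the observation applies with $I(C;u)\neq\emptyset$ throughout. For such $u$, since $u\in V\setminus C$ and $u$ lies in $\bigcap_{c\in I(C;u)}N[c]$, the set $\bigl(\bigcap_{c\in I(C;u)}N[c]\bigr)\setminus C$ always contains $u$, and by the observation it equals $\{u\}$ exactly when $I(C;u)\setminus I(C;v)\neq\emptyset$ for every $v\in(V\setminus C)\setminus\{u\}$. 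Quantifying over $u\in V\setminus C$ yields the definition of solid-locating-domination on one hand and the claimed characterization (together with the domination clause) on the other.

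The only point needing care is the bookkeeping: keeping straight whether the competitor $v$ ranges over all of $V$ (part (i)) or only over $V\setminus C$ (part (ii)), and making sure the clause $I(C;u)\neq\emptyset$ survives the passage between the definition and the characterization. I do not expect a substantive obstacle; all the content sits in the reformulation of $\bigcap_{c\in I(C;u)}N[c]$ via containment of $I$-sets.
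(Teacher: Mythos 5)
Your proof is correct. The paper imports this theorem from its reference without reproving it, and your argument is the natural one: the single observation that, for $I(C;u)\neq\emptyset$, $w\in\bigcap_{c\in I(C;u)}N[c]$ iff $I(C;u)\subseteq I(C;w)$ reduces both equivalences to bookkeeping, and you correctly handle the edge cases (nonemptiness of $I(C;u)$ via a second competitor vertex when one exists, and via connectivity when $|V\setminus C|\le 1$).
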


Based on the previous theorem, we obtain the following corollary.
\begin{corollary} \label{CorollarySLdtoDLD}
If $C$ is a self-locating-dominating or solid-locating-dominating code in $G$, then $C$ is also solid-locating-dominating or locating-dominating in $G$, respectively. Furthermore, for a finite graph $G$, we have $$\LD(G)\leq \GSLD(G) \leq \SLD(G) \text{.}$$ 
\end{corollary}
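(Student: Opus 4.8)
The plan is to derive the corollary directly from the characterizations in Theorem~\ref{ThmCharacterizationSLDILD}, treating the two implications separately and then chaining the resulting inequalities. The first task is to show that a self-locating-dominating code $C$ is solid-locating-dominating. Here I would invoke Theorem~\ref{ThmCharacterizationSLDILD}(i): for every $u \in V \setminus C$ and every $v \in V$ distinct from $u$, we have $I(C;u) \setminus I(C;v) \neq \emptyset$. In particular, restricting $v$ to range over $V \setminus C$ rather than all of $V$, the defining condition of a solid-locating-dominating code (for all distinct $u,v \in V\setminus C$, $I(C;u)\setminus I(C;v)\neq\emptyset$) is immediately satisfied. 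So this direction is essentially a weakening of the quantifier range.

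The second task is to show that a solid-locating-dominating code $C$ is locating-dominating. I would use Theorem~\ref{ThmCharacterizationSLDILD}(ii): for all $u \in V \setminus C$ we have $I(C;u) \neq \emptyset$, which gives the domination part (and also nonemptiness of the $I$-sets of non-codewords). For the separation part, suppose $u,v \in V \setminus C$ are distinct with $I(C;u) = I(C;v)$. Then the intersections $\bigcap_{c \in I(C;u)} N[c]$ and $\bigcap_{c \in I(C;v)} N[c]$ coincide, so by part~(ii) applied to $u$ we get $u \in (\bigcap_{c \in I(C;v)} N[c]) \setminus C = \{v\}$, forcing $u = v$, a contradiction. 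Hence $I(C;u) \neq I(C;v)$ and $C$ is locating-dominating.

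Finally, the cardinality inequalities follow from the containments just established together with optimality. Since every solid-locating-dominating code is locating-dominating, an optimal solid-locating-dominating code is in particular a locating-dominating code, so $\LD(G) \le \GSLD(G)$; since every self-locating-dominating code is solid-locating-dominating, an optimal self-locating-dominating code is a solid-locating-dominating code, so $\GSLD(G) \le \SLD(G)$. Chaining gives $\LD(G) \le \GSLD(G) \le \SLD(G)$.

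I do not anticipate a genuine obstacle here: the whole argument is bookkeeping with the characterizations. The only point requiring a moment's care is the second implication, where one must remember to use the characterization form of solid-location-domination (with the $\setminus C$ in the intersection) rather than the raw definition, and to note that equal $I$-sets produce equal intersections; everything else is a direct quantifier manipulation.
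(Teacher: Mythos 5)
Your proof is correct and follows essentially the route the paper intends: the corollary is stated as an immediate consequence of Theorem~\ref{ThmCharacterizationSLDILD}, with the first implication being a restriction of the quantifier range in part~(i) and the second following from part~(ii) exactly as you argue (though note the separation $I(C;u)\neq I(C;v)$ also drops out directly from the raw definition $I(C;u)\setminus I(C;v)\neq\emptyset$, so the detour through the intersection characterization is only really needed for the nonemptiness). The cardinality chain then follows from the containments as you state.
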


The structure of the paper is described as follows. First, in Section \ref{Grids}, we obtain optimal self-locating-dominating and solid-locating-dominating codes in the infinite triangular and king grids, i.e., the smallest possible codes regarding their density (a concept defined later). Regarding the triangular grid, the proofs are rather simple and straightforward, but they serve as nice introductory examples to the concepts of solid-location-domination and self-location-domination. However, the case with the king grid is more interesting; in particular, the proof of the lower bound for solid-location-domination is based on global arguments instead of only local ones, which are more usual in domination type problems. Then, in Section~\ref{secDirProd}, we give optimal locating-dominating, self-locating-dominating and solid-locating-dominating codes in the direct product $K_n\times K_m$ of complete graphs, where $2\leq n\leq m$. Finally, in Section \ref{Hamming}, we present optimal solid-locating-dominating codes for graphs $K_q\square K_q\square K_q$ with $q\geq2$.

\section{Triangular and king grids}\label{Grids}

In this section, we consider solid-location-domination and self-location-domination in the so called infinite triangular and king grids. As defined in the introduction, for finite graphs, the optimality of a code has been defined using the minimum cardinality. However, this method is not valid for the infinite graphs of this section. Hence, we need to use the usual concept of \textit{density} of a code, for a recent thorough investigation on the concept of density in the infinite graphs, see~\cite{sampaio2024density}. Both triangular and king grids have been widely studied in the field of location and domination; for an extensive coverage on the topics, an interested reader is referred to the online bibliography~\cite{lowww}. In~\cite{honkala2006optimal}, Honkala has shown that the optimal density for locating-dominating codes in the infinite triangular grid is $\frac{13}{57}$. In~\cite{honkala2006locating}, Honkala and Laihonen have shown that in the infinite king grid, the optimal density of locating-dominating codes is $1/5$. Furthermore, identifying codes have been considered in the king grid in~\cite{charon2002identifying, cohen2001codes} and in the triangular grid in~\cite{kcl}. The optimal densities are $ 2/9$ and $1/4$, respectively.

Let us first consider the \textit{infinite triangular grid}.
\begin{definition}
Let $G=(V,E)$ be a graph with the vertex set
\[
V=\left\{i(1,0)+j\left(\frac{1}{2},\frac{\sqrt{3}}{2}\right)\mid i,j\in \Z\right\}
\]
and two vertices are defined to be adjacent if their Euclidean distance is equal to one. The obtained graph $G$ is called the \textit{infinite triangular grid} and it is illustrated in Figure~\ref{Triangular grid}. We further denote $v(i,j)= i(1,0) +j \left(\frac{1}{2}, \frac{\sqrt{3}}{2}\right)$. Let $R_n$ be the subgraph of $G$ induced by the vertex set $V_n = \{v(i,j) \mid |i|, |j| \leq n \}$. The \emph{density} of a code in $G$ is now defined as follows:
\[
D(C) = \limsup_{n \rightarrow \infty} \frac{|C\cap V_n|}{|V_n|}.
\]
We say that a code is \emph{optimal} if there exists no other code with smaller density.
\end{definition}

%

\begin{figure}[H]
  \centering
    \begin{tikzpicture}
\clip (-2.7,-0.7) rectangle (10.2,2.45);

\node[main node, minimum size=0.3cm] (1)[fill=white] {};

\node[main node, minimum size=0.3cm] (2) [above right = 0.866cm and 0.5cm of 1]  [fill=white]{};
\node[main node, minimum size=0.3cm] (3) [above right = 0.866cm and 0.5cm of 2]  [fill=white]{};

\node[main node, minimum size=0.3cm] (4) [right = 1cm  of 1][fill=white] {};
\node[main node, minimum size=0.3cm] (5) [right = 1cm  of 4][fill=white] {$u$};
\node[main node, minimum size=0.3cm] (6) [right = 1cm  of 5][fill=white] {};
\node[main node, minimum size=0.3cm] (7) [right = 1cm  of 6][fill=white] {};
\node[main node, minimum size=0.3cm] (8) [right = 1cm  of 7][fill=white] {};

\node[main node, minimum size=0.3cm] (9) [right = 1cm  of 2][fill=white] {$v$};
\node[main node, minimum size=0.3cm] (10) [right = 1cm  of 9][fill=white] {};
\node[main node, minimum size=0.3cm] (11) [right = 1cm  of 10][fill=white] {};
\node[main node, minimum size=0.3cm] (12) [right = 1cm  of 11][fill=white] {};
\node[main node, minimum size=0.3cm] (17) [right = 1cm  of 12][fill=white] {};

\node[main node, minimum size=0.3cm] (13) [right = 1cm  of 3][fill=white] {};
\node[main node, minimum size=0.3cm] (14) [right = 1cm  of 13][fill=white] {$w$};
\node[main node, minimum size=0.3cm] (15) [right = 1cm  of 14][fill=white] {};
\node[main node, minimum size=0.3cm] (16) [right = 1cm  of 15][fill=white] {};


    \path[draw,thick]
    (1) edge  node {} (2)
      edge  node {} (4)
      (2)  edge  node {} (3)
      (2)  edge  node {} (4)
			edge  node {} (9)
 	(3)	edge  node {} (13)
 		edge  node {} (9)
 		(9) edge  node {} (13)
 		edge  node {} (10)
 		edge  node {} (4)
 		edge  node {} (5)
 		(5) edge  node {} (4)
 		edge  node {} (6)
 		edge  node {} (10)
 		(10) edge  node {} (11)
 		edge  node {} (6)
 		edge  node {} (13)
 		edge  node {} (14)
 		(14) edge  node {} (13)
 		edge  node {} (15)
 		edge  node {} (11)
 		(11)	edge  node {} (15)
 		edge  node {} (6)
 		edge  node {} (7)
 		edge  node {} (12)
 		(7)edge  node {} (12)
 		edge  node {} (6)
 		edge  node {} (8)
 		(12)edge  node {} (8)
 		edge  node {} (17)
 		edge  node {} (15)
 		edge  node {} (16)
 		(17)edge  node {} (8)
 		edge  node {} (16)
 		(15)edge  node {} (16)

;
    \end{tikzpicture}
\centering
\caption{Triangular grid with the vertices $v = v(0,0)$, $u = v(1,-1)$ and $w = v(1,1)$.}\label{Triangular grid}
\end{figure}

In the following theorem, optimal self-locating-dominating and solid-locating-dominating codes are given in the triangular grid. The methods used in the proof are rather typical for domination type of problems. However, we present the proof for completeness and as an introductory example.
\begin{theorem} \label{SLDkolmio}
Let $G=(V,E)$ be the triangular grid. The code
\[
C = \{v(i,j)\mid i, j \equiv 0 \pmod{2}\}
\]
is self-locating-dominating in $G$ and, therefore, also solid-locating-dominating. The density of the code $C$ is equal to $1/4$ and there exists no self-locating-dominating or solid-locating-dominating code with smaller density, i.e., the code is optimal in both cases.
\end{theorem}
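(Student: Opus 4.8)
The plan is to split the argument into two parts: first verifying that the given code $C$ is self-locating-dominating (from which solid-location-domination follows by Corollary~\ref{CorollarySLdtoDLD}), and then proving the density lower bound $1/4$.

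For the upper bound, I would work in the coordinate system $v(i,j)$. By translating, it suffices to understand $I(C;u)$ for the few residue classes of $u$ modulo $2$ in each coordinate. The neighbours of $v(i,j)$ in the triangular grid are $v(i\pm1,j)$, $v(i,j\pm1)$, $v(i+1,j-1)$ and $v(i-1,j+1)$. I would check the three nontrivial cases: (a) $i$ odd, $j$ even; (b) $i$ even, $j$ odd; (c) $i$ odd, $j$ odd. In each case one lists which closed neighbours lie in $C$, then computes $\bigcap_{c\in I(C;u)} N[c]$ and checks it equals $\{u\}$. For instance, in case (a) the codewords covering $u=v(i,j)$ are $v(i-1,j)$ and $v(i+1,j)$, and the intersection of their closed neighbourhoods is readily seen to be $\{u\}$ (their two common neighbours besides $u$ would have to be at distance one from both, and the geometry of the grid rules everything else out). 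Case (c) is the one where $u$ has the most codeword neighbours and needs the most care, but it is still a finite check. This also shows $C$ is dominating, so $C$ is self-locating-dominating, hence solid-locating-dominating and locating-dominating. The density computation $D(C)=1/4$ is immediate from the definition since exactly one vertex in every $2\times2$ block of $(i,j)$-coordinates is a codeword.

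For the lower bound, I would use a local/share argument: assign to each codeword $c$ a ``charge'' equal to the number of vertices it helps to dominate/locate, suitably weighted, and show this charge is at most $4$ per codeword while every vertex contributes total weight at least $1$; summing over a large box $V_n$ then forces $|C\cap V_n|\ge |V_n|/4 - o(|V_n|)$. Concretely, the natural weighting is to give vertex $v$ a demand of $1$ and let it distribute this demand equally among the codewords in $I(C;v)$; one must then bound $\sum_{v\in N[c]} 1/|I(C;v)|$ for each $c\in C$. The key combinatorial fact to establish is that a self-locating-dominating (equivalently, solid-locating-dominating) code cannot have too many non-codewords $v$ with $|I(C;v)|$ small near a given codeword: if $v,v'\in N[c]$ are two non-codewords with $I(C;v)=I(C;v')$, or more precisely with $I(C;v)\subseteq\{c\}$, the solid-location-domination condition $\left(\bigcap_{c'\in I(C;v)}N[c']\right)\setminus C=\{v\}$ would be violated, since $N[c]\setminus C$ would contain $v'$ as well. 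So at most one non-codeword in $N[c]$ can have $I$-set exactly $\{c\}$; combining this with $|N[c]|=7$ in the triangular grid gives the per-codeword bound of $4$.

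The main obstacle I expect is the lower bound bookkeeping: one must be careful that the ``$\le 4$ per codeword'' bound is genuinely tight and that the accounting does not leak charge across the boundary of $V_n$ in a way that ruins the $o(|V_n|)$ error term — this is handled by noting that only codewords within distance one of $\partial V_n$ are affected, and there are $O(n)$ of those versus $\Theta(n^2)$ vertices in $V_n$. The degenerate subcases in the covering analysis (vertices $v$ with $|I(C;v)|=1$) are exactly where solid-location-domination is used, so I would single those out explicitly rather than trying to push a uniform inequality through all cases at once.
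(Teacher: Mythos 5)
Your proposal is correct and follows essentially the same route as the paper: the same three-parity-case verification that $C$ is self-locating-dominating, and a lower bound resting on $|N[c]|=7$ together with $|I(C';v)|\ge 2$ for non-codewords, which the paper phrases as counting the pairs $(u,c)$ with $c\in C'\cap V_n$ and $u\in N[c]\cap V_{n-1}$ rather than as a share/discharging bound. Your explicit handling of the degenerate case $I(C';v)=\{c\}$ (which forces $N[c]\setminus C'=\{v\}$ and so still keeps the per-codeword charge at most $4$) is a sound way to justify the step the paper treats as immediate from the definition.
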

%
\begin{proof}
Let us first show that the code $C$ is self-locating-dominating in the triangular grid $G$. The proof now divides into the following cases depending on the parity of $i$ and $j$ in $v(i,j)$:
\begin{itemize}
\item If $i$ is odd and $j$ is even, then $$I(v(i,j))=\{v(i-1,j),v(i+1,j)\} \text{ and } N[v(i-1,j)] \cap N[v(i+1,j)] = \{v(i,j)\}.$$
\item Analogously, if $i$ is even and $j$ is odd, then $$I(v(i,j))=\{v(i,j-1),v(i,j+1)\} \text{ and } N[v(i,j-1)] \cap N[v(i,j+1)] = \{v(i,j)\}.$$
\item Finally, if $i$ and $j$ are both odd, then $$I(v(i,j))=\{v(i-1,j+1),v(i+1,j-1)\} \text{ and } N[v(i-1,j+1)] \cap N[v(i+1,j-1)] = \{v(i,j)\}.$$
\end{itemize}
Thus, as $v(i,j)$ is a codeword for even $i$ and $j$, the code $C$ is self-locating-dominating in $G$. Furthermore, we have $D(C)=\frac{1}{4}$ since $v(i,j)$ is a codeword if and only if $i$ and $j$ are both even, showing that the density of self-locating-dominating codes in $G$ is at most $\frac{1}{4}$. Notice that $C$ is also a solid-locating-dominating code.


For the lower bound, assume that $C'$ is a solid-locating-dominating code in $G$. Immediately, by the definition of solid-locating-dominating codes, we know that $|I(C';u)| \geq 2$ for any non-codeword $u$ in the triangular grid. Therefore, by counting in two ways the pairs $(u,c)$, where $c \in C' \cap V_n$ and $u \in N[c] \cap V_{n-1}$, we first obtain that $7|C' \cap V_n| \geq |C' \cap V_{n-1}| + 2(|V_{n-1}| - |C' \cap V_{n-1}|)$. Indeed, since $|N[v]|=7$ for any vertex $v$, there are at most $7|C' \cap V_n|$ pairs $(u,c)$. Furthermore, since each codeword $c$ contributes one pair $(u,c)$ by choosing $u=c$ and since each non-codeword in $V_{n-1}$ contributes at least two pairs $(u,c)$, we obtain the inequality. We may modify the inequality further to
$|C' \cap V_{n-1}| + 2(|V_{n-1}| - |C' \cap V_{n-1}|)\geq 2|V_{n-1}| - |C' \cap V_{n-1}| \geq 2|V_{n-1}| - |C' \cap V_{n}|$, which implies $|C' \cap V_n| \geq |V_{n-1}|/4$. Thus, we may estimate the density of $C'$ as follows:
\[
D(C') = \limsup_{n \rightarrow \infty} \frac{|C'\cap V_n|}{|V_n|} \geq \limsup_{n \rightarrow \infty} \frac{|V_{n-1}|/4}{|V_n|} = \frac{1}{4} \text{.}
\]
\end{proof}

Next we consider the more interesting problems of solid-location-domination and self-location-domination in the infinite king grid. Let us first begin by defining the grid and the density of a code in it.
\begin{definition}
Let $G=(V,E)$ be a graph with $V=\Z^2$. For the vertices $v=(v_1,v_2)\in V$ and $u=(u_1,u_2)\in V$ with $u\neq v$, we have $vu\in E$ if and only if $|v_1-u_1|\leq1$ and $|v_2-u_2|\leq1$. The obtained graph $G$ is called the \textit{infinite king grid}. Further let $V_n$ be a subset of $V$ such that $V_n=\{(x,y)\mid |x|\leq n, |y|\leq n\}$. The \emph{density} of a code $C \subseteq V = \Z^2$ is now defined as$$D(C)=\underset{n\to \infty}{\lim\sup}\frac{|C\cap V_n|}{|V_n|}.$$ We say that a code is \emph{optimal} if there exists no other code with smaller density.
\end{definition}

\medskip

 In what follows, we first consider solid-location-domination in the king grid. In the following theorem, we present a solid-locating-dominating code in the king grid with density $1/3$. Later, in Theorem~\ref{ThmKingLowerBound}, it is shown that the code is optimal.
\begin{theorem} \label{ThmDLDKingConstruction}
Let $G=(V,E)$ be the king grid. The code
\[
C =  \left\{(x,y)\in \Z^2\mid |x|+|y|\equiv 0 \pmod 3\right\}
\]
is solid-locating-dominating in $G$ and its density is $1/3$.
\begin{proof}
Let $C=\left\{(x,y)\in \Z^2\mid |x|+|y|\equiv 0 \mod 3\right\}$ be a code in $G$ (illustrated in Figure~\ref{DLD in king grid}). By the definition, it is immediate that the density of $C$ is equal to $1/3$. In order to show that $C$ is a solid-locating-dominating code in $G$, we prove that the condition of Theorem~\ref{ThmCharacterizationSLDILD}(ii) holds for every non-codeword of $G$. Let $u = (x,y) \in \Z^2$ be a vertex not belonging to $C$. Suppose first that $x=0$ and $y > 0$. Now, if $y \equiv 1 \pmod{3}$, then $I(u) = \{u + (0,-1), u + (-1,1), u + (1,1)\}$ and $N[u + (0,-1)] \cap N[u + (-1,1)] \cap N[u + (1,1)] = \{u\}$, else $y \equiv 2 \pmod{3}$ implying $I(u) = \{u + (-1,0), u + (1,0), u + (0,1)\}$ and $(N[u + (-1,0)] \cap N[u + (1,0)] \cap N[u + (0,1)]) \setminus C = \{u\}$. Thus, the required condition is met. The case with $y < 0$ is analogous. Moreover, the case with $y=0$ is symmetrical to the one with $x = 0$. Hence, we may assume that $x \neq 0$ and $y \neq 0$.

Suppose then that $x \geq 1$ and $y \geq 1$. Now we have either $I(u) = \{u + (0,-1), u + (-1,0), u + (1,1)\}$ or $I(u) = \{u + (0,1), u + (1,0), u + (-1,-1)\}$. In both cases, we obtain that $\bigcap_{c \in I(u)} N[c] = \{u\}$ and the condition is satisfied. The other (three) cases with $x \leq -1$ or $y \leq -1$ can be handled analogously. Thus, in conclusion, $C$ is a solid-locating-dominating code in $G$.
\end{proof}

\end{theorem}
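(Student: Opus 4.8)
The plan is to verify the two claims separately: that the density of $C$ equals $1/3$, and that $C$ satisfies the solid-locating-dominating condition, using the characterization of Theorem~\ref{ThmCharacterizationSLDILD}(ii).

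For the density, I would split the box $V_n$ into its four closed quadrants. On $\{(x,y):x,y\ge 0\}$ we have $|x|+|y|=x+y$, and for each fixed $x\in\{0,\dots,n\}$ the quantity $x+y$ runs through $n+1$ consecutive integers as $y$ ranges over $\{0,\dots,n\}$, hence lies in the residue class $0\pmod 3$ exactly $\lfloor (n+1)/3\rfloor$ or $\lceil (n+1)/3\rceil$ times; summing over $x$ gives $n^2/3+O(n)$ codewords in that quadrant, and the same count holds in each of the other three quadrants (where $|x|+|y|$ equals $\pm x\pm y$). Since the coordinate axes meet $V_n$ in only $O(n)$ points and are over-counted a bounded number of times, $|C\cap V_n|=(2n+1)^2/3+O(n)$, so $D(C)=\limsup_{n\to\infty}|C\cap V_n|/|V_n|=1/3$. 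This step is purely computational.

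For the solid-locating-dominating property, by Theorem~\ref{ThmCharacterizationSLDILD}(ii) it suffices to show that every non-codeword $u=(x,y)$ is dominated and that $\bigl(\bigcap_{c\in I(u)}N[c]\bigr)\setminus C=\{u\}$. The reflections $x\mapsto -x$ and $y\mapsto -y$ and the transposition $x\leftrightarrow y$ are automorphisms of the king grid that preserve the set $C$, so I would reduce to $x\ge 0$, $y\ge 0$ and then distinguish two cases (the origin itself lies in $C$ and needs nothing checked). In the interior case $x\ge 1$, $y\ge 1$, on the whole of $N[u]$ we have $|x+a|+|y+b|=(x+y)+(a+b)$; writing $x+y\equiv s\pmod 3$ with $s\in\{1,2\}$, the codewords in $N[u]$ are exactly the vertices $u+(a,b)$ with $(a,b)\in\{-1,0,1\}^2$ and $a+b\equiv -s\pmod 3$, i.e. one of the two three-element sets $\{u+(0,-1),u+(-1,0),u+(1,1)\}$ or $\{u+(0,1),u+(1,0),u+(-1,-1)\}$. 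In either case the three closed neighbourhoods constrain the first coordinate and the second coordinate each to a single value, so the intersection is exactly $\{u\}$. In the axis case $x=0$, $y\ge 1$ one has to recompute, because $|0+a|=|a|$ is no longer linear in $a$: here $I(u)$ equals $\{u+(0,-1),u+(-1,1),u+(1,1)\}$ if $y\equiv 1\pmod 3$ and $\{u+(-1,0),u+(1,0),u+(0,1)\}$ if $y\equiv 2\pmod 3$, and the intersection of the corresponding closed neighbourhoods is $\{u\}$ in the first subcase but $\{u,(0,y+1)\}$ in the second — yet $(0,y+1)\in C$, so after removing $C$ it is again $\{u\}$. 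Assembling these cases and their symmetric images shows $C$ meets the condition of Theorem~\ref{ThmCharacterizationSLDILD}(ii), hence is solid-locating-dominating.

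The step I expect to be the main nuisance is the axis case: the identity $|x|+|y|=x+y$ that makes the interior transparent fails there, the $I$-set has a different shape, and — as the $y\equiv 2\pmod 3$ subcase shows — the intersection of neighbourhoods is genuinely larger than $\{u\}$, so the $\setminus C$ in Theorem~\ref{ThmCharacterizationSLDILD}(ii) is really needed and one cannot shortcut by claiming a clean singleton intersection everywhere. Everything else — the density count, the interior case, and the domination checks (which are exactly the assertions that the listed $I$-sets are nonempty) — is routine. An alternative would be to verify the original definition directly, namely that $I(u)\setminus I(v)\neq\emptyset$ for all distinct non-codewords $u,v$, but the neighbourhood-intersection form of Theorem~\ref{ThmCharacterizationSLDILD}(ii) keeps all of the bookkeeping attached to a single vertex and seems cleaner.
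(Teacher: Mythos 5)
Your proposal is correct and follows essentially the same route as the paper: the same characterization from Theorem~\ref{ThmCharacterizationSLDILD}(ii), the same split into the interior case (where $|x|+|y|$ is affine on $N[u]$ and the two possible $I$-sets pin down both coordinates) and the axis case, and the same observation that in the $y\equiv 2\pmod 3$ axis subcase the neighbourhood intersection is $\{u,u+(0,1)\}$ so the ``$\setminus C$'' is genuinely needed. The only differences are cosmetic: you spell out the density count and the symmetry reduction that the paper leaves as ``immediate'' and ``analogous''.
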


\begin{figure}[H]
  \centering
    \begin{tikzpicture}
    \matrix[square matrix]
    {
    |[fill=darkgray]|$ $ & $ $ & $ $ & |[fill=darkgray]|$ $ & $ $ & $ $ & |[fill=darkgray]|$ $ & $ $ & $ $ & |[fill=darkgray]|$ $ & $ $ & $ $ & |[fill=darkgray]|$ $ \\
    $ $ & $ $ & |[fill=darkgray]|$ $ & $ $ & $ $ & |[fill=darkgray]|$ $ & $ $ & |[fill=darkgray]|$ $ & $ $ & $ $ & |[fill=darkgray]|$ $ & $ $ & $ $ \\
    $ $ & |[fill=darkgray]|$ $ & $ $ & $ $ & |[fill=darkgray]|$ $ & $ $ & $ $ & $ $ & |[fill=darkgray]|$ $ & $ $ & $ $ & |[fill=darkgray]|$ $ & $ $ \\
    |[fill=darkgray]|$ $ & $ $ & $ $ & |[fill=darkgray]|$ $ & $ $ & $ $ & |[fill=darkgray]|$ $ & $ $ & $ $ & |[fill=darkgray]|$ $ & $ $ & $ $ & |[fill=darkgray]|$ $ \\
    $ $ & $ $ & |[fill=darkgray]|$ $ & $ $ & $ $ & |[fill=darkgray]|$ $ & $ $ & |[fill=darkgray]|$ $ & $ $ & $ $ & |[fill=darkgray]|$ $ & $ $ & $ $ \\
    $ $ & |[fill=darkgray]|$ $ & $ $ & $ $ & |[fill=darkgray]|$ $ & $ $ & $ $ & $ $ & |[fill=darkgray]|$ $ & $ $ & $ $ & |[fill=darkgray]|$ $ & $ $ \\
    |[fill=darkgray]|$ $ & $ $ & $ $ & |[fill=darkgray]|$ $ & $ $ & $ $ & |[fill=darkgray]|$ $ & $ $ & $ $ & |[fill=darkgray]|$ $ & $ $ & $ $ & |[fill=darkgray]|$ $ \\
    $ $ & |[fill=darkgray]|$ $ & $ $ & $ $ & |[fill=darkgray]|$ $ & $ $ & $ $ & $ $ & |[fill=darkgray]|$ $ & $ $ & $ $ & |[fill=darkgray]|$ $ & $ $ \\
    $ $ & $ $ & |[fill=darkgray]|$ $ & $ $ & $ $ & |[fill=darkgray]|$ $ & $ $ & |[fill=darkgray]|$ $ & $ $ & $ $ & |[fill=darkgray]|$ $ & $ $ & $ $ \\
    |[fill=darkgray]|$ $ & $ $ & $ $ & |[fill=darkgray]|$ $ & $ $ & $ $ & |[fill=darkgray]|$ $ & $ $ & $ $ & |[fill=darkgray]|$ $ & $ $ & $ $ & |[fill=darkgray]|$ $ \\
    $ $ & |[fill=darkgray]|$ $ & $ $ & $ $ & |[fill=darkgray]|$ $ & $ $ & $ $ & $ $ & |[fill=darkgray]|$ $ & $ $ & $ $ & |[fill=darkgray]|$ $ & $ $ \\
    $ $ & $ $ & |[fill=darkgray]|$ $ & $ $ & $ $ & |[fill=darkgray]|$ $ & $ $ & |[fill=darkgray]|$ $ & $ $ & $ $ & |[fill=darkgray]|$ $ & $ $ & $ $ \\
    |[fill=darkgray]|$ $ & $ $ & $ $ & |[fill=darkgray]|$ $ & $ $ & $ $ & |[fill=darkgray]|$ $ & $ $ & $ $ & |[fill=darkgray]|$ $ & $ $ & $ $ & |[fill=darkgray]|$ $ \\
    };
    \end{tikzpicture}
\centering
\caption{
The darkened squares form a solid-locating-dominating code of density $\frac{1}{3}$ in the king grid.}
\label{DLD in king grid}
\end{figure}

Usually, the best known constructions for domination type codes in infinite grids are formed by a repetition of a finite pattern. However, this is not the case with the code $C$ of the previous theorem. 
Another observation is that the codeword $c = (0,0)$ has a special role as a sort of center of the code. In particular, the density of the code (or more precisely the ratio $|C \cap V_n|/|V_n|$) in the close proximity of $c$ is less than $1/3$. Consider now the lower bound on the density of a solid-locating-dominating code. Usually, the lower bounds are obtained by locally studying the symmetric difference of closed neighbourhoods of vertices or the domination properties of vertices (such as the concept of share~\cite{S:fault-tolerant} or the common technique used in the proof of Theorem~\ref{SLDkolmio}). However, in order to deal with the special type of codewords $c$, we develop a new technique of more global nature.
For this purpose, we first present the following lemma on a forbidden pattern of non-codewords.
\begin{lemma}\label{No T}
Let $G=(V,E)$ be the king grid and $C\subseteq V$ be a solid-locating-dominating code in $G$. Then $T=\{(i,j),(i,j+1),(i,j+2),(i+1,j+2),(i-1,j+2)\}$ and any formation obtained from $T$ by a rotation of $\pi/2$, $\pi$ or $3\pi/2$ radians around the origin contains a codeword of $C$.
\begin{proof}
Assume that the set $T=\{(i,j),(i,j+1),(i,j+2),(i+1,j+2),(i-1,j+2)\}$ contains no codewords of $C$. Then a contradiction with the definition follows since $I(i,j+1) \setminus I(i,j) = \emptyset$. The other cases obtained from $T$ by a rotation are proved analogously.
\end{proof}
\end{lemma}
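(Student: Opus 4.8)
The plan is to verify directly that if $C$ is a solid-locating-dominating code, then none of the five prescribed vertices can all be non-codewords, and to do this by exhibiting two non-codewords whose $I$-sets violate the separation condition. Specifically, I would focus on the middle vertex $v=(i,j+1)$ of the vertical segment together with its neighbour $u=(i,j)$: every codeword that can cover $v$ must lie among the eight king-neighbours of $v$ plus $v$ itself, and once $(i,j)$, $(i,j+1)$, $(i,j+2)$, $(i+1,j+2)$, $(i-1,j+2)$ are all excluded as codewords, the only remaining candidates in $N[v]$ are exactly the three cells $(i-1,j)$, $(i+1,j)$, $(i,j)$ — wait, $(i,j)$ is excluded, so in fact $N[v]\setminus T = \{(i-1,j),(i+1,j),(i-1,j+1),(i+1,j+1)\}$, all of which also lie in $N[(i,j)]=N[u]$. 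Hence $I(C;v)\subseteq N[u]$, giving $I(C;v)\setminus I(C;u)=\emptyset$, which contradicts part (ii) of the solid-locating-dominating definition applied to the distinct non-codewords $u$ and $v$.

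Next I would observe that the argument only uses the combinatorial geometry of the king grid and not the particular position of $T$, so that translating the configuration to any $(i,j)$ changes nothing; and since the king grid is invariant under the four rotations by multiples of $\pi/2$ about the origin, the same contradiction is obtained verbatim for each rotated copy of $T$ by relabelling coordinates. Concretely, under a rotation by $\pi/2$ the vertical arm $\{(i,j),(i,j+1),(i,j+2)\}$ becomes a horizontal arm, the two ``cap'' cells $(i\pm1,j+2)$ become the two cells flanking the far end, and the roles of $u$ and $v$ (the tip and the middle of the arm) are preserved; the inclusion $I(C;v)\subseteq N[u]$ survives the relabelling. So it suffices to spell out the case of $T$ itself in detail and remark that the other three follow by symmetry.

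The only mildly delicate point — and the one I would be most careful about — is making sure the list of candidate codewords covering $v$ that remain after deleting $T$ really is contained in $N[u]$: one must check that $v$ itself, the three cells directly below and below-diagonally from $v$ (namely $(i,j),(i-1,j),(i+1,j)$), and the two cells $(i-1,j+1),(i+1,j+1)$ are the complete list of elements of $N[v]$ outside $\{(i,j+2),(i+1,j+2),(i-1,j+2)\}$, and that each of these (after removing $(i,j+1)=v$ and $(i,j)$, which are non-codewords) lies within king-distance $1$ of $u=(i,j)$. This is immediate from $N[u]=\{(i+a,j+b)\mid a,b\in\{-1,0,1\}\}$, which contains $(i-1,j),(i+1,j),(i-1,j+1),(i,j+1),(i+1,j+1)$; thus indeed $N[v]\cap C\subseteq N[u]$ once $T\cap C=\emptyset$, and no real obstacle remains.

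\begin{proof}
Assume for contradiction that the set $T=\{(i,j),(i,j+1),(i,j+2),(i+1,j+2),(i-1,j+2)\}$ contains no codeword of $C$. Put $u=(i,j)$ and $v=(i,j+1)$; these are distinct non-codewords. Every codeword covering $v$ lies in $N[v]=\{(i+a,j+1+b)\mid a,b\in\{-1,0,1\}\}$. Removing from $N[v]$ the five cells of $T$ (all of which are non-codewords by assumption) leaves only the cells $(i-1,j)$, $(i+1,j)$, $(i-1,j+1)$ and $(i+1,j+1)$ as possible members of $I(C;v)$. Each of these four cells lies in $N[u]=\{(i+a,j+b)\mid a,b\in\{-1,0,1\}\}$, so $I(C;v)\subseteq N[u]$, whence $I(C;u)\setminus I(C;v)$ may be nonempty but $I(C;v)\setminus I(C;u)=\emptyset$. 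This contradicts the definition of a solid-locating-dominating code for the distinct non-codewords $u,v\in V\setminus C$. The cases obtained from $T$ by a rotation of $\pi/2$, $\pi$ or $3\pi/2$ radians around the origin are proved analogously, since the king grid is invariant under these rotations and the argument above uses only the relative positions of the cells of $T$.
\end{proof}
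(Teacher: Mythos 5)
Your proof is correct and follows essentially the same route as the paper: both exhibit the two non-codewords $(i,j)$ and $(i,j+1)$ and observe that once $T$ is codeword-free, every remaining candidate in $N[(i,j+1)]$ lies in $N[(i,j)]$, so $I(C;(i,j+1))\setminus I(C;(i,j))=\emptyset$, contradicting the definition; the rotated cases follow by symmetry in both treatments. Your version merely spells out the neighbourhood computation that the paper leaves implicit.
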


In the following theorem, we prove that the solid-locating-dominating code of Theorem~\ref{ThmDLDKingConstruction} is optimal, i.e., there is no code with density smaller than $1/3$. The proof is based on the idea of studying one-way infinite strips of vertices of width $3$ and showing that the density of codewords in these strips is at least $1/3$.
\begin{theorem} \label{ThmKingLowerBound}
If $G=(V,E)$ is the king grid and $C\subseteq V$ is a solid-locating-dominating code in $G$, then the density $D(C) \geq \frac{1}{3}$.
\begin{proof}
Let $S^j$ be a subgraph of $G$ induced by the vertex set $V'_j=\left\{(x,y)\mid 1\leq x\leq 3, 1\leq y\leq j\right\}$. Recall first the definition $V_n = \{(x,y) \mid |x| \leq n, |y| \leq n\}$. Observe now that we may fit into the first quadrant $\{(x,y)\mid 1\leq x\leq n, 1\leq y\leq n\}$ of $V_n$ $\lfloor n/3 \rfloor$ graphs isomorphic to $S^n$. Similarly, the other three quadrants of $V_n$ can each contain $\lfloor n/3 \rfloor$ graphs isomorphic to $S^n$. Thus, in total, $4\lfloor n/3 \rfloor$ graphs isomorphic to $S_n$ can be fit into $V_n$.

%

Let $C$ be a solid-locating-dominating code in $G$. In the final part of the proof, we show that any subgraph of $G$ isomorphic to $S^n$ contains at least $n-3$ codewords. Assuming this is the case, the density of $C$ can be estimated as follows:
\[
D(C)= \limsup_{n \rightarrow \infty} \frac{|C \cap V_n|}{|V_n|} \geq \limsup_{n \rightarrow \infty} \frac{4\lfloor\frac{n}{3}\rfloor\cdot(n-3)}{(2n+1)^2} \geq \limsup_{n \rightarrow \infty} \frac{4(n-3)^2}{3(2n+1)^2} = \frac{1}{3} \text{.}
\]

It remains to be shown that any subgraph of $G$ isomorphic to $S^n$ contains at least $n-3$ codewords. By symmetry, it is enough to show that $|C\cap V'_n|\geq n-3$. In what follows, we consider more closely the number of codewords in a row $S_i = \{(j,i)\mid 1\leq j\leq3 \}$ of $V'_n$. For this purpose, the following set of rules for rearranging the codewords inside $V'_n$ are introduced:
%
\begin{itemize}[wide=0pt, leftmargin=\dimexpr\labelwidth + 2\labelsep\relax]
\item[\textbf{Rule} $1.1$:] If $S_i\cap C=\emptyset$, $1\leq i \leq n-1$ and $\{(1,i+1),(3,i+1) \}\subseteq C$, then one codeword is moved from $S_{i+1}$ to $S_i$. The rule is illustrated in Figure~\ref{R1}.
\item[\textbf{Rule} $1.2$:] If $S_i\cap C=\emptyset$, $2\leq i$ and $\{(1,i-1),(3,i-1) \}\subseteq C$, then one codeword is moved from $S_{i-1}$ to $S_i$. The rule can be viewed as a reflected version of Rule~1.1.
\item[\textbf{Rule} $2.1$:] If $S_i\cap C=\emptyset$, $2\leq i$ and $\{(1,i-1),(2,i-1) \}= C\cap S_{i-1}$, then one codeword is moved from $S_{i-1}$ to $S_i$. The rule is illustrated in Figure~\ref{R2}.
\item[\textbf{Rule} $2.2$:] If $S_i\cap C=\emptyset$, $2\leq i$ and $\{(2,i-1),(3,i-1) \}= C\cap S_{i-1}$, then one codeword is moved from $S_{i-1}$ to $S_i$. The rule can be viewed as a reflected version of Rule~2.1.
\item[\textbf{Rule} $3.1$:] If $S_i\cap C=\emptyset$, $3\leq i$, $S_{i-1}\cap C=\{(1,i-1)\}$ and $\{(2,i-2),(3,i-2)\}\subseteq S_{i-2}\cap C$, then one codeword is moved from $S_{i-2}$ to $S_i$. The rule is illustrated in Figure~\ref{R3}.
\item[\textbf{Rule} $3.2$:] If $S_i\cap C=\emptyset$, $3\leq i$, $S_{i-1}\cap C=\{(3,i-1)\}$ and $\{(2,i-2),(1,i-2)\}\subseteq S_{i-2}\cap C$, then one codeword is moved from $S_{i-2}$ to $S_i$. The rule can be viewed as a reflected version of Rule~3.1.
\item[\textbf{Rule} $4.1$:] If $S_i\cap C=\emptyset$, $3\leq i$, $S_{i-1}\cap C=\{(1,i-1)\}$ and $\{(1,i-2),(2,i-2)\}= S_{i-2}\cap C$, then one codeword is moved from $S_{i-2}$ to $S_i$. The rule is illustrated in Figure~\ref{R4}.
\item[\textbf{Rule} $4.2$:] If $S_i\cap C=\emptyset$, $3\leq i$, $S_{i-1}\cap C=\{(3,i-1)\}$ and $\{(2,i-2),(3,i-2)\}= S_{i-2}\cap C$, then  one codeword is moved from $S_{i-2}$ to $S_i$. The rule can be viewed as a reflected version of Rule~4.1.
\end{itemize}

\begin{figure}[H]
  \centering
  \begin{minipage}[b]{0.2\textwidth}
    \begin{tikzpicture}
    \matrix[square matrix]
    {
    $X $ & $? $ & $X $ \\
    $ $ & $ $ & $ $ \\
    };

            \coordinate(d1);
        \coordinate[below left = 0.3cm and 0.8cm of d1] (d2);
        \coordinate[above = 0.6 cm of d2] (d3);

 [above left = 1.8cm and 0.75cm of 1]

    \path[draw,thick]
    (d3) edge [->, bend right=80, looseness=1.5] node {} (d2)

 ;

\end{tikzpicture}
\centering
\caption{\newline Rule $1.1$\newline\newline}\label{R1}   \end{minipage}
  \hfill
  \begin{minipage}[b]{0.2\textwidth}
\begin{tikzpicture}
    \matrix[square matrix]
    {
    $ $ & $ $ & $ $ \\
    $X $ & $X $ & $ $ \\
    };

            \coordinate(d1);
        \coordinate[below left = 0.3cm and 0.8cm of d1] (d2);
        \coordinate[above = 0.6 cm of d2] (d3);

 [above left = 1.8cm and 0.75cm of 1]

    \path[draw,thick]
    (d2) edge [->, bend left=80, looseness=1.5] node {} (d3)

 ;
\end{tikzpicture}
\centering

\caption{\newline Rule $2.1$\newline\newline}\label{R2}
  \end{minipage}
  \hfill
  \begin{minipage}[b]{0.2\textwidth}
\begin{tikzpicture}
    \matrix[square matrix]
    {
    $ $ & $ $ & $ $ \\
    $X $ & $ $ & $ $ \\
    $? $ & $X $ & $X $ \\
    };

            \coordinate(d1);
        \coordinate[below left = 0.5cm and 0.8cm of d1] (d2);
        \coordinate[above of=d2] (d3);

 [above left = 1.8cm and 0.75cm of 1]

    \path[draw,thick]
    (d2) edge [->, bend left=80, looseness=1.5] node {} (d3)

 ;
\end{tikzpicture}
\centering

\caption{\newline Rule $3.1$\newline\newline}\label{R3}
  \end{minipage}
    \hfill
  \begin{minipage}[b]{0.2\textwidth}
\begin{tikzpicture}
    \matrix[square matrix]
    {
    $ $ & $ $ & $ $ \\
    $X $ & $ $ & $ $ \\
    $X $ & $X $ & $ $ \\
    };
        \coordinate(d1);
        \coordinate[below left = 0.5cm and 0.8cm of d1] (d2);
        \coordinate[above of=d2] (d3);

 [above left = 1.8cm and 0.75cm of 1]

    \path[draw,thick]
    (d2) edge [->, bend left=80, looseness=1.5] node {} (d3)

 ;
\end{tikzpicture}
\centering

\caption{\newline Rule $4.1$\newline\newline}\label{R4}
  \end{minipage}
\end{figure}

Denote the code obtained after simultaneously applying the previous rules by $C'$. Notice that the rearrangement $C'$ of $C$ is not completely determined by the previous rules and that this is not actually needed as in the following we are only interested on the number of codewords in the rows of $V'_n$. In other words, when a codeword is moved from a row we can choose any of the codewords and move it to replace any non-codeword of the target row. In what follows, we show that each row which has given away codewords still contains at least one and each row which originally did not contain any codeword has received at least one except possibly the rows $S_1$, $S_2$ and $S_n$.


We immediately notice that the rules move codewords only from the rows with at least two codewords. Each type of row with at least two codewords is examined as follows:
\begin{itemize}[wide=0pt, leftmargin=\dimexpr\labelwidth + 2\labelsep\relax]
\item $C\cap S_i = \{(j,i)\mid 1\leq j\leq3\}$: Rules $1.1, 1.2, 3.1$ and $3.2$ can be applied on rows with three codewords. Among these, Rules $3.1$ and $3.2$ cannot be applied at the same time and Rule $1.2$ cannot be applied together with Rules $3.1$ or $3.2$. Hence, we apply at most two rules on a row with three codewords and that row has at least one codeword left in the code $C'$.
\item $C\cap S_i = \{(j,i)\mid 1\leq j\leq2\}$: Rules $2.1,3.2$ and $4.1$ can be applied on this types of rows. We cannot apply Rule $2.1$ at the same time as $3.2$ or $4.1$ since $2.1$ requires that $S_{i+1}\cap C=\emptyset$ and Rules $3.2$ and $4.1$ require that $|S_{i+1}\cap C|=1$. Furthermore, we cannot apply Rules $3.2$ and $4.1$ at the same time since they require the codeword on the row $S_{i+1}$ to locate at different places. Hence, $C'$ is left with at least one codeword.
\item $C\cap S_i = \{(j,i)\mid 2\leq j\leq3\}$: This case is symmetrical to the previous one (now the rules to be considered are 2.2, 3.1 and 4.2).
\item $C\cap S_i = \{(j,i)\mid j\neq 2\}$: We can only apply Rules $1.1$ and $1.2$ on these types of rows and both of them only when $i\geq 2$. However, if both of the rules are used, then $C \cap S_{i-1} = C \cap S_{i+1} = \emptyset$ and a contradiction with Lemma~\ref{No T} follows. Hence, at most one rule is used and $|C' \cap S_i| \geq 1$.
\end{itemize}

Let us then show that we have $|C'\cap S_i|\geq1$ for each $i$ such that $C\cap S_i=\emptyset$ and $3 \leq i \leq n-1$. In the following cases, we assume that $S_{i}\cap C=\emptyset$ and the cases are categorized by considering the different formations of the row $S_{i-1}$.
\begin{itemize}[wide=0pt, leftmargin=\dimexpr\labelwidth + 2\labelsep\relax]
\item $S_{i-1}\cap C=\emptyset$: Considering different orientations and positions of the formation $T$ in Lemma~\ref{No T}, we have $S_{i+1}\subseteq C$. Hence, due to Rule $1.1$, one codeword from $S_{i+1}$ is moved to $S_i$ and we obtain $|C'\cap S_{i}|\geq1$.
\item $S_{i-1}\cap C=\{(1,i-1)\}$: By Lemma~\ref{No T}, we have $(2,i-2)\in C$. Notice that if $(1,i-2)$ and $(3,i-2)$ do not belong to $C$, then a contradiction with the definition of solid-locating-dominating codes follows since we have $I(2,i-1) \subseteq I(1,i-2)$ for non-codewords $(2,i-1)$ and $(1,i-2)$. Hence, at least one of the vertices $(1,i-2)$ and $(3,i-2)$ belongs to $C$. Therefore, either Rule $3.1$ or $4.1$ can be applied (to the row $S_{i-2}$) and we have $|C'\cap S_{i}|\geq 1$.
\item $S_{i-1}\cap C=\{(3,i-1)\}$: This case is symmetrical to the previous one. Here we just use either Rule $3.2$ or $4.2$.
\item $S_{i-1}\cap C=\{(2,i-1)\}$: By Lemma~\ref{No T}, we have $\{(1,i+1),(3,i+1)\}\subseteq C$. Hence, due to Rule $1.1$, we have $|C'\cap S_{i}|\geq1$.
\item $S_{i-1}\cap C=\{(1,i-1),(2,i-1)\}$: Due to Rule $2.1$, we have $|C'\cap S_{i}|\geq1$.
\item $S_{i-1}\cap C=\{(2,i-1),(3,i-1)\}$: Due to Rule $2.2$, we have $|C'\cap S_{i}|\geq1$.
\item $S_{i-1}\cap C=\{(1,i-1),(3,i-1)\}$: Due to Rule $1.2$, we have $|C'\cap S_{i}|\geq1$.
\item $S_{i-1}\cap C=\{(1,i-1),(2,i-1),(3,i-1)\}$: Due to Rule $1.2$, we have $|C'\cap S_{i}|\geq1$.
\end{itemize}
Thus, in conclusion, we have shown that for $3 \leq i \leq n-1$ we have $|C' \cap S_i| \geq 1$. Therefore, as the rules rearrange codewords only inside $V'_n$, we have $|C \cap V'_n| \geq |C' \cap V'_n| \geq n-3$. This concludes the proof of the lower bound $D(C) \geq 1/3$.
%
\end{proof}
\end{theorem}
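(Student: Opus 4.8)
The plan is to match the density $1/3$ of the construction in Theorem~\ref{ThmDLDKingConstruction} by a counting argument that is global along one-way infinite strips of width three, rather than a purely local share-type estimate; the latter is bound to fail because, as observed above, the ratio $|C\cap V_n|/|V_n|$ near the distinguished codeword $(0,0)$ sits below $1/3$, so no codeword-by-codeword bound on shares can hold everywhere. Concretely, I would fix the $3\times n$ strip $V'_n=\{(x,y)\mid 1\le x\le 3,\ 1\le y\le n\}$ with horizontal rows $S_i=\{(1,i),(2,i),(3,i)\}$, observe that $\lfloor n/3\rfloor$ translates of this strip fit inside each quadrant of $V_n$ (hence $4\lfloor n/3\rfloor$ inside $V_n$), and reduce the theorem to the single inequality
\[
|C\cap V'_n|\ \ge\ n-3 ,
\]
after which $D(C)\ge \limsup_{n}\frac{4\lfloor n/3\rfloor(n-3)}{(2n+1)^2}=\tfrac13$ closes the argument. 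Two preliminary facts feed in: every non-codeword $u$ has $|I(C;u)|\ge 2$ (if $I(u)=\{c\}$ then Theorem~\ref{ThmCharacterizationSLDILD}(ii) forces $N[c]\setminus\{u\}\subseteq C$, but $u$ has at least three neighbours inside $N[c]$, so actually $|I(u)|\ge 3$, a contradiction), together with the forbidden pattern of Lemma~\ref{No T}.

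To prove $|C\cap V'_n|\ge n-3$ I would argue that, after a suitable redistribution, each of the $n$ rows of $V'_n$ contributes at least one codeword. Introduce a bank of local \emph{rules}, each of which removes a codeword from a row containing at least two codewords and places it into a nearby empty row (one or two rows away, above or below), and let $C'$ be any code obtained by firing all applicable rules simultaneously inside $V'_n$; only the per-row counts matter, so $C'$ need not be uniquely determined. The rules must be designed so that: (a) a row that donates a codeword is still left with at least one; and (b) every interior empty row $S_i$ (with $C\cap S_i=\emptyset$, $3\le i\le n-1$) receives at least one codeword, so that the only rows possibly empty in $C'$ are $S_1$, $S_2$ and $S_n$; then $|C\cap V'_n|=|C'\cap V'_n|\ge n-3$. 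The rules come in reflection-symmetric pairs and are keyed to the configuration of the row immediately above or below an empty row: an adjacent empty row (which by Lemma~\ref{No T} forces the next row to be full, so one codeword is pushed across the empty one), an adjacent row with a single codeword (where Lemma~\ref{No T} plus a containment argument of the form $I(a)\subseteq I(b)$ pins down enough codewords two rows away to allow a distance-two rule), and an adjacent row with two or three codewords (handled by a direct one-step rule).

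The main obstacle is assembling this rule set so that it is simultaneously complete, non-conflicting, and boundary-respecting. For (b) one checks, case by case on the configuration of $S_{i-1}$, that at least one rule always applies to an interior empty row; the delicate subcases are those where $S_{i-1}$ contains exactly one codeword, for which neither $|I(u)|\ge 2$ nor Lemma~\ref{No T} alone suffices and one needs the sharper observation that, e.g., $I(2,i-1)\subseteq I(1,i-2)$ would violate solid-location-domination, thereby forcing a codeword into $S_{i-2}$. For (a) one classifies the rows with at least two codewords by which size-$\ge 2$ subset of $\{(1,i),(2,i),(3,i)\}$ they realise and verifies that at most two rules fire on any such row — the point being that distinct rules impose incompatible demands on the neighbouring rows (one needs the row above empty while another needs it to contain exactly one codeword, or firing two would create a forbidden copy of $T$). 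Once this bookkeeping is complete, the constant loss of three rows is precisely the price of the strip boundary, and the remaining steps (the tiling count and the limit computation) are routine.
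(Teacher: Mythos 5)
Your proposal follows essentially the same route as the paper's proof: the same $3\times n$ strips $V'_n$ tiled $4\lfloor n/3\rfloor$ times into $V_n$, the same reduction to $|C\cap V'_n|\ge n-3$, and the same discharging scheme of reflection-symmetric one-step and two-step rules verified via Lemma~\ref{No T} and the containment argument $I(2,i-1)\subseteq I(1,i-2)$, with the identical two checks that donating rows keep a codeword and interior empty rows receive one. The only caveat is that you leave the explicit rule set and its conflict analysis as bookkeeping to be filled in, but the structure you describe is exactly what the paper carries out.
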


In the previous theorems, we have shown that the density of an optimal solid-locating-dominating code in the king grid is $1/3$. Recall that a self-locating-dominating code is always solid-locating-dominating. Hence, by the previous lower bound, we also know that there exists no self-locating-dominating code in the king grid with density smaller than $1/3$. However, the construction given for the solid-location-domination does not work for self-location-domination. For example, we have $I(2,0) = \{(2,-1), (2,1), (3,0)\}$ and $N[(2,-1)] \cap N[(2,1)] \cap N[(3,0)] = \{(2,0), (3,0)\}$ contradicting with the definition of self-locating-dominating codes (see Figure~\ref{DLD in king grid}). In the following theorem, we present a self-locating-dominating code in the king grid with the density $1/3$. 
 Theorems~\ref{ThmDLDKingConstruction},~\ref{ThmKingLowerBound} and~\ref{thmKingSLDConst} imply that while the optimal density for both self- and solid-locating-dominating codes in the infinite king grid is $1/3$, there exist solid-locating-dominating codes which are not self-locating-dominating although each self-locating-dominating code is always solid-locating-dominating.

\begin{theorem}\label{thmKingSLDConst}
Let $G=(V,E)$ be the king grid. The code
\[
C =  \left\{(x,y)\in \Z^2\mid x - y \equiv 0 \pmod 3\right\}
\]
is self-locating-dominating in $G$ and its density is $1/3$.
\begin{proof}
The density $D(C) = 1/3$ since in each row every third vertex is a codeword. Furthermore, $C$ is a self-locating-dominating code since each non-codeword $v$ is covered either by the set of three codewords $\{v+(1,0),v+(0,-1),v+(-1,1)\}$ or $\{v+(-1,0),v+(0,1),v+(1,-1)\}$, and in both cases the closed neighbourhoods of the codewords intersect uniquely in the vertex $v$.
\end{proof}
\end{theorem}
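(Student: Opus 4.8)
The plan is to verify the two assertions separately: the density computation is immediate, and self-location-domination reduces to a short piece of parity bookkeeping together with one small finite check. For the density, note that in a fixed row $y=y_0$ a vertex $(x,y_0)$ is a codeword exactly when $x\equiv y_0\pmod 3$, so every third vertex of each row lies in $C$; hence $|C\cap V_n|/|V_n|\to 1/3$ and $D(C)=1/3$.

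For self-location-domination I would argue directly from the definition. Let $u=(x,y)$ be a non-codeword, so $x-y\equiv 1$ or $x-y\equiv 2\pmod 3$. The key point is to track how the quantity $x-y$ changes under king-moves: adding $(1,0)$ or $(0,-1)$ raises $x-y$ by $1$ and adding $(-1,1)$ lowers it by $2$ (so each of these three moves shifts $x-y$ by $+1$ modulo $3$), while symmetrically adding $(-1,0)$ or $(0,1)$ lowers $x-y$ by $1$ and adding $(1,-1)$ raises it by $2$. Therefore, if $x-y\equiv 2\pmod 3$, then the three neighbours $u+(1,0)$, $u+(0,-1)$, $u+(-1,1)$ of $u$ all satisfy the congruence $\equiv 0\pmod 3$ and hence are codewords; and if $x-y\equiv 1\pmod 3$, then the three neighbours $u+(-1,0)$, $u+(0,1)$, $u+(1,-1)$ are codewords. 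In either case $I(C;u)\neq\emptyset$ and $I(C;u)$ contains the corresponding three-element set $A$.

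It then suffices to show $\bigcap_{c\in A}N[c]=\{u\}$: since $A\subseteq I(C;u)$ this forces $\bigcap_{c\in I(C;u)}N[c]\subseteq\{u\}$, the reverse inclusion being automatic, which is exactly the defining condition for a self-locating-dominating code. By translation we may take $u=(0,0)$. For $A=\{(1,0),(0,-1),(-1,1)\}$ the sets of first coordinates of $N[(1,0)]$, $N[(0,-1)]$, $N[(-1,1)]$ are $\{0,1,2\}$, $\{-1,0,1\}$, $\{-2,-1,0\}$, which meet only in $0$, and the same holds for the second coordinates, so $\bigcap_{c\in A}N[c]=\{(0,0)\}$; the other triple $A=\{(-1,0),(0,1),(1,-1)\}$ is the image of the first under $(x,y)\mapsto(-x,-y)$ and is handled identically. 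Hence $C$ is self-locating-dominating, and by Corollary~\ref{CorollarySLdtoDLD} it is also solid-locating-dominating.

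No genuine obstacle arises in this argument; the only points requiring attention are the modular bookkeeping of $x-y$ that selects which of the two candidate triples of neighbours actually consists of codewords, and the one-line verification that the three closed neighbourhoods in each triple intersect in exactly $\{u\}$.
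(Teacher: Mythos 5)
Your proof is correct and follows essentially the same route as the paper's: it identifies the two triples $\{v+(1,0),v+(0,-1),v+(-1,1)\}$ and $\{v+(-1,0),v+(0,1),v+(1,-1)\}$ of codeword neighbours (according to the residue of $x-y$ modulo $3$) and checks that their closed neighbourhoods intersect only in $v$. You merely supply the modular bookkeeping and the coordinate-wise intersection argument that the paper leaves implicit.
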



\section{Direct product of complete graphs}\label{secDirProd}

A graph is called a \emph{complete graph} on $q$ vertices, denoted by $K_q$, if each pair of vertices of the graph is adjacent. The vertex set $V(K_q)$ is denoted by $\{1,2, \ldots, q\}$. The \emph{Cartesian product} of two graphs $G_1=(V_1,E_1)$ and $G_2=(V_2,E_2)$ is defined as $G_1\square G_2=(V_1\times V_2,E)$, where $E$ is a set of edges such that $(u_1,u_2)(v_1,v_2)\in E$ if and only if $u_1=v_1$ and $u_2v_2\in E_2$, or $u_2=v_2$ and $u_1v_1\in E_1$. The \emph{direct product} of two graphs $G_1$ and $G_2$ is defined as $G_1\times G_2=(V_1\times V_2, E)$, where $E=\{(u_1,u_2)(v_1,v_2)\mid u_1v_1\in E_1\text{ and } u_2v_2\in E_2\}$. A \emph{complement} of a graph $G = (V,E)$ is the graph $\overline{G} = (V,E')$ with the edge set $E'$ being such that $uv \in E'$ if and only if $uv \notin E$.

In this section, we give optimal locating-dominating, self-locating-dominating and solid-locating-dominating codes in the direct product $K_n\times K_m$, where $2\leq n\leq m$. For location-domination and solid-location-domination, the results heavily depend on the exact values of $\LD(K_n \square K_m)$ and $\GSLD(K_n \square K_m)$, which have been determined in \cite{JLLrntcld}. In the graphs $K_n\times K_m$ and $K_n \square K_m$, the $j$th \emph{row} (of $V(K_n)\times V(K_m)$) is denoted by $R_j$ and it consists of the vertices $(1,j), (2,j), \ldots, (n,j)$. Analogously, the $i$th \emph{column} is denoted by $P_i$ and it consists of the vertices $(i,1), (i,2), \ldots, (i,m)$. Now we are ready to present the following observations:
\begin{itemize}
\item In the Cartesian product $K_n \square K_m$, the closed neighbourhood $N[(i,j)] = N[i,j]$ consists of the row $R_j$ and the column $P_i$. Therefore, as the closed neighbourhood of a vertex resembles the movements of a rook in a chessboard, $K_n \square K_m$ is also sometimes called the \emph{rook's graph}.
\item In the direct product $K_n \times K_m$, we have $N((i,j)) = N(i,j) = V(K_n \square K_m) \setminus (R_j \cup P_i)$.
\end{itemize}
Due to the previous observations, we know that $\overline{K_n\square K_m}=K_n\times K_m$. 

Recall that identification is a topic closely related to the various location-domination type problems. Previously, in~\cite{WashThesis}, the identifying codes have been studied in the direct product $K_n\times K_m$ of complete graphs by Rall and Wash. More precisely, they determined the exact values of $\ID(K_n\times K_m)$ for all $m$ and $n$.


In what follows, we determine the exact values of $\LD(K_n \times K_m)$ for all $m$ and $n$. For this purpose, we first present the following result concerning location-domination in the Cartesian product $K_n\square K_m$ of complete graphs given in~\cite{JLLrntcld}.
\begin{theorem}[\cite{JLLrntcld},~Theorem 14]\label{cartesian LD}
Let $m$ and $n$ be integers such that $2 \leq n\leq m$. Now we have
$$\LD(K_n \square K_m)=\begin{cases}
 m-1, & 2n\leq m, \
\cr \left\lceil \frac{2n+2m}{3}\right\rceil-1, & n\leq m\leq 2n-1.
\end{cases}$$
\end{theorem}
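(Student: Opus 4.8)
The plan is to reduce to a clean combinatorial characterization of locating-dominating codes in the rook's graph $K_n\square K_m$ and then run a counting argument for the lower bound together with two explicit families for the upper bound. Writing $c_i=|C\cap P_i|$ and $r_j=|C\cap R_j|$, the starting point is that for a non-codeword $(i,j)$ one has $I(C;(i,j))=(C\cap R_j)\cup(C\cap P_i)$, a disjoint union of cardinality $r_j+c_i$. Call a codeword \emph{isolated} if it is the unique codeword of its column and also the unique codeword of its row. The first step is to prove: $C$ is locating-dominating in $K_n\square K_m$ if and only if (1) at most one of the $n+m$ lines (rows and columns) contains no codeword, and (2) at most one codeword is isolated. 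For a pair of distinct non-codewords $u,v$ there are three cases — same row, same column, and differing in both coordinates. The first two are immediate: two codeword-free columns (respectively rows) give either equal $I$-sets or an undominated vertex. The third is the delicate case; comparing $I(u)$ and $I(v)$ column by column forces, after a short analysis, either two codeword-free lines or two isolated codewords $(i,j')$ and $(i',j)$, and the latter make $(i,j)$ and $(i',j')$ indistinguishable. The converse follows by running the same analysis backwards. This equivalence is technical but routine.

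For the lower bound, let $C$ be locating-dominating, and let $x_0,x_1$ (respectively $y_0,y_1$) be the numbers of columns (respectively rows) containing exactly $0$ and exactly $1$ codeword. By (1), $x_0+y_0\le 1$, and counting codewords once columnwise and once rowwise gives $|C|\ge 2n-2x_0-x_1$ and $|C|\ge 2m-2y_0-y_1$. The crucial inequality is $x_1+y_1\le |C|+1$: the $x_1$ codewords alone in their columns and the $y_1$ codewords alone in their rows form two subsets of $C$ whose union has size at most $|C|$ and whose intersection is exactly the set of isolated codewords, of size at most $1$ by (2). Adding the two bounds and substituting $x_0+y_0\le 1$ and $x_1+y_1\le|C|+1$ yields $3|C|\ge 2n+2m-3$, hence $|C|\ge\lceil(2n+2m)/3\rceil-1$; separately, (1) forces at least $m-1$ nonempty rows, so $|C|\ge m-1$. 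A short check shows $\max\{m-1,\lceil(2n+2m)/3\rceil-1\}$ equals $m-1$ when $2n\le m$ and equals $\lceil(2n+2m)/3\rceil-1$ when $n\le m\le 2n-1$, which is the claimed value.

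For the upper bound it suffices, by the characterization, to build a code of the right size satisfying (1) and (2). When $2n\le m$, leave $R_m$ empty, place one codeword in each of $R_1,\dots,R_{m-1}$, and distribute them so that $P_1,\dots,P_{n-1}$ each receive two codewords and $P_n$ receives the remaining $m-2n+1\ge 1$; then every codeword is alone in its row and at most one — the codeword in $P_n$, and only when $m=2n$ — is also alone in its column, so $|C|=m-1$ works. When $n\le m\le 2n-1$, take $d_r=\max\{0,\lceil(2n-m-3)/3\rceil\}$ rows and $d_c=d_r+m-n+1$ columns to be ``doubled'' (two codewords each), leave $P_n$ empty (or, in the boundary case $m=2n-1$, use no empty line at all), make every other row and column a singleton, and place codewords so that each singleton-row codeword lies in a doubled column and each singleton-column codeword lies in a doubled row, allowing at most one codeword in a cell that is both a singleton row and a singleton column; a direct count gives exactly $\lceil(2n+2m)/3\rceil-1$ codewords, and (1)--(2) hold by construction. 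The point needing most care is checking that these prescribed row and column sums are actually realizable by a concrete placement with at most one such cell (a short Gale--Ryser-type argument), together with a few boundary pairs $(n,m)$ — for instance $n=2$, or $m\in\{2n-2,2n-1\}$ — which are cleanest to treat directly. Everything else is bookkeeping once the characterization is in hand.
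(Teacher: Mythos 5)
Your argument is correct: the characterization (at most one codeword-free line, at most one isolated codeword) is a valid restatement of location-domination in the rook's graph, the double counting $x_1+y_1\le |C|+1$ together with $|C|\ge 2n-2x_0-x_1$ and $|C|\ge 2m-2y_0-y_1$ gives $3|C|\ge 2n+2m-3$, and the two constructions (including the flagged boundary cases $m=2n-1$ and the residue classes of $2n-m$ modulo $3$) do realize the bound. This is essentially the same approach as the source: the paper only cites this theorem from \cite{JLLrntcld}, but the lower-bound counting it reproduces inside the proof of Theorem~\ref{direct LD} (counting rows and columns with exactly one codeword and bounding their overlap via isolated codewords) is exactly your argument, so no further comparison is needed.
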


There is a strong connection between the values of $\LD(K_n \square K_m)$ and $\LD(K_n \times K_m)$ as explained in the following. In~\cite{complementLD}, it has been shown that $|\gamma^{LD}(G)-\LD(\overline{G})|\leq1$. Therefore, as $\overline{K_n\times K_m}=K_n\square K_m$, we obtain that $\LD(K_n\square K_m)-1\leq \LD(K_n\times K_m)\leq \LD(K_n\square K_m)+1$. This result is further sharpened in the following lemma.
\begin{lemma}\label{LD Kq x Kp}
For $2\leq n\leq m$ and $(n,m) \neq (2,4)$, we have $$\gamma^{LD}(K_n\square K_m)-1\leq \gamma^{LD}(K_n\times K_m)\leq \gamma^{LD}(K_n\square K_m).$$
If $\LD(K_n\times K_m) = \LD(K_n\square K_m) -1$, then every optimal locating-dominating code $C$ in $K_n\times K_m$ has a non-codeword $v$ such that $I(v)=C$.
\begin{proof}
First denote $G=K_n\square K_m$ and $H=K_n\times K_m$. The lower bound of the claim is immediate by the result preceding the lemma. For the upper bound, let $C$ be an optimal locating-dominating code in $G$. The code $C$ can also be viewed as a code in $H$. If we have $I(H;u)=I(H;v)$ for some non-codewords $u$ and $v$, then a contradiction follows since $I(G;u)=C\setminus I(H;u)=C\setminus I(H;v)=I(G;v)$. Hence, we have $I(H;u) \neq I(H;v)$ for all distinct non-codewords $u$ and $v$. Moreover, if $I(G;v)\neq C$ for each non-codeword $v$, then we also have $I(H;v)\neq \emptyset$, and the upper bound follows since $C$ is a locating-dominating code in $H$.

Hence, we may assume that $I(G;v)=C$ for some non-codeword $v$. This implies that $C\subseteq P_i\cup R_j$ for some $i,j$. There exists at most one non-codeword in $P_i \setminus \{v\}$ since otherwise there are at least two non-codewords with the same $I$-set. Similarly, there exists at most one non-codeword in $R_j \setminus \{v\}$. Furthermore, if both $P_i \setminus \{v\}$ and $R_j \setminus \{v\}$ contain a non-codeword, then there exists a vertex with an empty $I$-set. Thus, in conclusion, there exists at most two non-codewords in $P_i\cup R_j$ and, hence, we have $|C|\geq n+m-3$. Dividing into the following cases depending on $n$ and $m$, we next show that $|C| \geq n+m-3>\LD(G)$ in majority of the cases of the lemma: 
\begin{itemize}
\item If $n \geq 3$ and $m \geq 2n$, then we have $\LD(G) = m-1 < n+m-3 \leq |C|$ (by Theorem~\ref{cartesian LD}).
\item If $n \geq 4$, $n \leq m \leq 2n-1$ and $(n,m) \neq (4,4)$, then $\LD(G) = \lceil 2(n+m)/3 \rceil - 1 < n+m-3 \leq |C|$ (by Theorem~\ref{cartesian LD}).
\end{itemize}
Thus, if $n \geq 3$ and $m \geq 2n$, or $n \geq 4$, $n \leq m \leq 2n-1$ and $(n,m) \neq (4,4)$, then a contradiction with the optimality of $C$ follows. Hence, in these cases, we have $\gamma^{LD}(H)\leq \gamma^{LD}(G)$.

The rest of the cases are covered in the following:
\begin{itemize}
\item If $n = 2$ and $2 \leq m \leq 3$, then $C = P_1$ is an optimal locating-dominating code in $G$ with the property that for any non-codeword $v$ we have $I(G; v) \neq C$. Similarly, if $n = 2$ and $m \geq 5$, then $C = \{(2,1),(2,2)\}\cup P_1\setminus\{(1,i)\mid i\leq3\}$ is an optimal locating-dominating code in $G$ with the property that for no vertex $v$ we have $I(G; v) = C$. Thus, in both cases, the code $C$ is also locating-dominating in $H$ by the first paragraph of the proof.
\item If $n = m = 3$, then $C = \{(1,1),(1,2),(2,1)\}$ is a locating-dominating code in $H$ with $\gamma^{LD}(G) = 3$ codewords.
\item If $n = 3$ and $4 \leq m \leq 5$ or $(n,m) = (4,4)$, then $\{(1,1),(1,3),(2,2),(2,4)\}$, $\{(1,1), (1,3),\linebreak (2,2),(2,4),(3,5)\}$ and $\{(1,1), (1,3), (2,2), (2,4), (3,1)\}$ obtained from the proof of \cite[Theorem 14]{JLLrntcld} are optimal locating-dominating codes in $K_3\square K_4$, $K_3\square K_5$ and $K_4\square K_4$, respectively. Therefore, since there does not exist a non-codeword covering all the codewords (in the Cartesian product) in any of the cases, the codes are also locating-dominating in $K_3\times K_4$, $K_3\times K_5$ and $K_4\times K_4$ (by the first paragraph of the proof), respectively.
\end{itemize}

Let then $C'$ be a locating-dominating code in $H$. Similarly as above, we get that if $I(H;v)\neq C'$ for each non-codeword $v$, then $C'$ is also a locating-dominating code in $G$. Therefore, if $\LD(H) = \LD(G) -1$, then there exist a non-codeword $v$ such that $I(H;v) = C'$. 
Thus, the last claim of the lemma follows.
\end{proof}
\end{lemma}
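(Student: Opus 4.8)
The plan is to exploit the complement identity $\overline{K_n\times K_m}=K_n\square K_m$ recorded above, together with the bound $|\gamma^{LD}(G)-\LD(\overline G)|\le 1$ of \cite{complementLD}, which already yields $\gamma^{LD}(K_n\square K_m)-1\le \gamma^{LD}(K_n\times K_m)\le \gamma^{LD}(K_n\square K_m)+1$. The lower bound of the lemma is then free, and the entire task is to shave one codeword off the upper estimate and, along the way, to read off the structural last assertion.

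Write $G=K_n\square K_m$ and $H=K_n\times K_m$. The first step is the elementary identity that for a vertex $u=(i,j)$ one has $N_H(u)=V\setminus N_G[u]=V\setminus(R_j\cup P_i)$, so that for any code $C$ and any non-codeword $u$, $I(H;u)=C\setminus I(G;u)$. I would then take an optimal locating-dominating code $C$ in $G$ and test it in $H$: if $I(H;u)=I(H;v)$ for distinct non-codewords, then $I(G;u)=I(G;v)$, impossible; and $I(H;u)=\emptyset$ exactly when $I(G;u)=C$. Hence $C$ is already locating-dominating in $H$ — giving $\gamma^{LD}(H)\le\gamma^{LD}(G)$ — unless some non-codeword $v$ of $G$ covers all of $C$, which is the only case needing real work.

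In that case $C\subseteq N_G[v]=R_j\cup P_i$, a set of $n+m-1$ vertices. A short counting argument shows that $R_j\cup P_i$ can host at most two non-codewords: at most one off-diagonal non-codeword in $P_i$ and at most one in $R_j$ (two in the same line would share an $I$-set), and not one of each, since then the vertex at the intersection of their respective line and column would have empty $I$-set in $G$. Thus $|C|\ge n+m-3$. Comparing with Theorem~\ref{cartesian LD}, one checks that $\gamma^{LD}(G)<n+m-3$ whenever $n\ge 3$ and $m\ge 2n$, or $n\ge 4$, $n\le m\le 2n-1$ and $(n,m)\neq(4,4)$; in all these ranges this contradicts the optimality of $C$, so no such $v$ exists. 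The finitely many leftover pairs — $n=2$ (all admissible $m$, with $(n,m)=(2,4)$ excluded), $n=3$ with $3\le m\le 5$, and $(n,m)=(4,4)$ — I would settle by hand, exhibiting for each an explicit optimal locating-dominating code of $G$ that no non-codeword covers entirely (e.g.\ $C=P_1$ for $n=2$ and $m\le 3$, and codes extracted from the proof of Theorem~\ref{cartesian LD} for the remaining pairs), which by the identity above is then locating-dominating in $H$.

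Finally, for the last claim I would run the identity the other way: if $C'$ is locating-dominating in $H$, then $I(G;u)=C'\setminus I(H;u)$ for non-codewords $u$, so $C'$ is automatically locating in $G$ and is dominating there unless some non-codeword $v$ has $I(H;v)=C'$. Therefore, if $\gamma^{LD}(H)=\gamma^{LD}(G)-1$, an optimal locating-dominating code $C'$ of $H$ cannot be locating-dominating in $G$ and so must have a non-codeword $v$ with $I(v)=C'$. The main obstacle is the bad-case analysis: getting the tight bound of at most two non-codewords inside $R_j\cup P_i$ right, and then sifting through exactly which small $(n,m)$ fail $\gamma^{LD}(G)<n+m-3$ and disposing of each of them with an ad hoc construction.
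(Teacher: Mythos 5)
Your proposal is correct and follows essentially the same route as the paper's own proof: transfer an optimal code of $K_n\square K_m$ to $K_n\times K_m$ via $I(H;u)=C\setminus I(G;u)$, isolate the bad case where a non-codeword covers all of $C$, derive $|C|\ge n+m-3$ from the at-most-two-non-codewords count in $R_j\cup P_i$, compare against Theorem~\ref{cartesian LD}, finish the small pairs with explicit codes, and obtain the final structural claim by running the same identity in the reverse direction. The only cosmetic difference is that for $n=m=3$ the paper exhibits a code directly in $K_3\times K_3$ rather than one in the Cartesian product avoided by every non-codeword, but this does not change the argument.
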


Now with the help of the previous lemma and Theorem~\ref{cartesian LD}, we determine the exact values of $\LD(K_m\times K_n)$ in the following theorem.
\begin{figure}[!tbp]
  \centering
\captionsetup{justification=centering}
\begin{tikzpicture}
    \matrix[big square matrix]
    {
    $ $ & $ $ & $ $ & $ $ & $ $ & $ $ & $ $ & $ $ & $ $ & $ $ \\
    $ $ & $ $ & $ $ & |[fill=darkgray]|$ $ & $ B_1$ & $ B_1$ & $ $ & $ $ & $ $ & $ $ \\
    $ $ & $ $ & $ $ & $  B_1$ & |[fill=darkgray]|$ $ & $ B_1$ & $ $ & $ $ & $ $ & $ $ \\
    $ $ & $ $ & $ $ & $ B_1$ & $ B_1$ & |[fill=darkgray]|$ $ & $ $ & $ $ & $ $ & $ $ \\
    $ $ & $ $ & $ $ & $ B_1$ & $ B_1$ & |[fill=darkgray]|$ $ & $ $ & $ $ & $ $ & $ $ \\
    $ $ & $ $ & $ $ & $ B_1$ & |[fill=darkgray]|$ $ & $ B_1$ & $ $ & $ $ & $ $ & $ $ \\
    $ $ & $ $ & $ $ & |[fill=darkgray]|$ $ & $ B_1$ & $ B_1$ & $ $ & $ $ & $ $ & $ $ \\
    $ B_2$ & $ B_2$ & |[fill=darkgray]|$ $ & $ B_1B_2$ & $ B_1B_2$ & $ B_1B_2$ & $ B_2$ & $ B_2$ & |[fill=darkgray]|$ $ & $ $ \\
    $ B_2$ & |[fill=darkgray]|$ $ & $ B_2$ & $ B_1B_2$ & $ B_1B_2$ & $ B_1B_2$ & $ B_2$ & |[fill=darkgray]|$ $ & $ B_2$ & $ $ \\
    |[fill=darkgray]|$ $ & $ B_2$ & $ B_2$ & $ B_1B_2$ & $ B_1B_2$ & $ B_1B_2$ & |[fill=darkgray]|$ $ & $ B_2$ & $ B_2$ & $ $ \\
    };
\end{tikzpicture}
\centering
\captionsetup{justification=raggedright}
\caption{Optimal locating-dominating code for $K_{10}\times K_{10}$. Dark boxes are codewords.}\label{10x10 LD}
\centering
\end{figure}
\begin{theorem}\label{direct LD}
For $2\leq n\leq m$ we have $$\LD(K_n\times K_m)=\begin{cases}m-1,&2n\leq m \text{ and } (n,m)\neq (2,4),\\
\left\lceil\frac{2n+2m-1}{3}\right\rceil-1,& 2<n\leq m<2n \text{ and } (m,n) \neq (4,4)\\
m,& n=2, m\leq4,\\
5,& n=4,m=4.\end{cases}$$
\begin{proof}
Let $C$ be a locating-dominating code in $K_n\times K_m$. We cannot have $R_i\cap C=R_j\cap C=\emptyset$ for $i\neq j$ since otherwise, for example, $I(C;(1,i)) = I(C;(1,j))$. Similarly, there exists at most one column without codewords of $C$. 
Thus, we have $\LD(K_n\times K_m)\geq m-1$. Therefore, if $m\geq2n$ and $(n,m)\neq(2,4)$, then by the previous lemma we have $m-1 \leq \LD(K_n\times K_m) \leq \LD(K_n\square K_m) = m-1$, i.e., $\LD(K_n\times K_m) = m-1$. 

Assume then that $2< n\leq m\leq 2n-1$ and $n+m \equiv 0, 1 \! \pmod{3}$. In what follows, we show that now $|C| \geq \LD(K_n\square K_m)$. By the previous lemma, we know that if there is no non-codeword $u$ such that $I(K_n\times K_m, C; u) = C$, i.e., there does not exist a row and column without codewords, then $|C| =\LD(K_n\square K_m)$. 
Hence, we may now assume that there exist a row and a column without codewords. Without loss of generality, we may assume that they are $P_n$ and $R_m$. Observe that $C$ can now also be viewed as a code in $K_{n-1}\square K_{m-1}$ and that $C$ is locating-dominating in $K_{n-1}\square K_{m-1}$ with the following additional properties: (i) each column has at least one codeword, (ii) each row has at least one codeword and (iii) no codeword $(i,j) \in C$ is such that $(P_i \cup R_j) \cap C = \{(i,j)\}$, i.e., no codeword of $C$ is isolated. Indeed, the properties~(i) and (ii) follow immediately by the first paragraph of the proof and if $(i,j) \in C$ is a codeword violating the property~(iii), then we have $I(K_n\times K_m,(n,j))=I(K_n\times K_m;(i,m))=C\setminus \{(i,j)\}$ (a contradiction). Now we are ready to prove a lower bound on $|C|$ as in \cite[Theorem 14]{JLLrntcld}. Denote the number of columns and rows with exactly one codeword in $K_n\square K_m$ by $s_p$ and $s_r$, respectively. Now we obtain that $|C| \geq s_p + 2(n-1-s_p) = 2(n-1)-s_p$ and $|C| \geq s_r + 2(m-1-s_r) = 2(m-1)-s_r$ (by the properties~(i) and (ii)). This further implies that $s_p \geq 2(n-1) - |C|$ and $s_r \geq 2(m-1) - |C|$. By the property~(iii), we now obtain that $|C|\geq s_p+s_r\geq 2(n-1)+2(m-1)-2|C|$. Thus, we have $|C|\geq \left\lceil (2m+2n-1)/3 \right\rceil-1$. Hence, as $n+m \equiv 0, 1 \! \pmod{3}$, we have $|C|\geq \left\lceil (2m+2n-1)/3 \right\rceil-1 = \left\lceil (2m+2n)/3 \right\rceil -1 = \LD(K_n\square K_m)$. Thus, by the upper bound of the previous lemma, we obtain that $\LD(K_n\times K_m) = \LD(K_n\square K_m)$ if $2< n\leq m\leq 2n-1$ and $n+m \equiv 0, 1 \! \pmod{3}$.


Assume then that $2< n\leq m\leq 2n-1$, $n+m \equiv 2 \! \pmod{3}$ and $(n,m) \neq (4,4)$. In what follows, we show that the lower bound of Lemma~\ref{LD Kq x Kp} is attained, i.e., $\LD(K_n\times K_m) = \LD(K_n\square K_m) - 1$. Denote $n'=n-1$ and $m'=m-1$ and observe that $n'+m'$ is divisible by three. Let $C'=A_1\cup A_2\cup A_3$ be a code in $K_n \times K_m$ with
\begin{align*}
A_1&=\left\{(i,i)\mid 1\leq i\leq \frac{n'+m'}{3}\right\},\\
A_2&=\left\{(j,i)\mid \frac{n'+m'}{3}+1\leq i\leq m', j=2\frac{n'+m'}{3}+1-i\right\}\text{ and }\\
A_3&=\left\{(j,i)\mid 1\leq i\leq \frac{2n'-m'}{3}, j=i+\frac{n'+m'}{3}\right\}.
\end{align*}
The code $C'$ is illustrated in Figure~\ref{10x10 LD}. By straightforward counting , we get $|C'| = |A_1|+|A_2|+|A_3| = m' + \frac{2n'-m'}{3}=\frac{2n'+2m'}{3}=\frac{2n+2m-1}{3}-1=\LD(K_n\square K_m)-1$. In what follows, we first show that $C'$ is almost a locating-dominating code in $K_n\square K_m$ with the exception that $I(C';(n,m)) = \emptyset$.

Denote the sets of non-codewords $(j,i)$ with $(2n'+2m')/3-m'+1\leq j\leq (n'+m')/3$ and $i\leq (2n'+2m')/3-m'$ by $B_1$ and $B_2$, respectively. It is straightforward to verify that each non-codeword $u \in B_1 \cup B_2$ has at least three codewords in $I(K_n\square K_m,C';u)$  and the codewords of $I(K_n\square K_m,C';u)$ do not lie on a single row or column. This implies that $\bigcap_{c \in I(C';u)} N[c] = \{u \}$ for any $u \in B_1 \cup B_2$, i.e., there is no other vertex containing $I(C';u)$ in its $I$-set. Thus, each non-codeword in $B_1 \cup B_2$ has a unique nonempty $I$-set. Consider then a non-codeword $v=(j,i)$ with $i> (2n'+2m')/3-m'$ and $j<(2n'+2m')/3-m+1'$. By the construction of $C'$, we have $|I(C';v)| = 2$. Now there exists a codeword $(j,j)\in I(v)$ since $j\leq (2n'+2m')/3-m'$. Furthermore, there exists a codeword $c \in I(j,j)\cap A_3$. Hence, if there exists a non-codeword $w$ such that $I(C';v) = I(C';w)$, then $w \in B_2$ and a contradiction follows as $|I(C';w)| \geq 3$. Thus, the $I$-set of $v$ is nonempty and unique. Similarly, it can be shown that $I(C';(j,i))$ is nonempty and unique for $i>(2n'+2m')/3-m'$ and $j>(n'+m')/3$.

Consider then non-codewords $u = (j,m)$ and  $v = (n, i)$ with $1 \leq j \leq n-1$ and $1 \leq i \leq m-1$. We immediately obtain that $I(C';(j,m)) = P_j \cap C'$ and $I(C';(n,i)) = R_i \cap C'$. These $I$-sets are nonempty since each row and column contains a codeword. These $I$-sets are also different from the ones of non-codewords inside $K_{n'} \square K_{m'}$ which contain at least two codewords in different rows and columns. It is also impossible to have $I(C';u) = I(C';v)$ since each codeword has another one in the same row or column. Thus, $u$ and $v$ have nonempty and unique $I$-sets. Thus, in conclusion, we have shown that $I(C';u)$ is nonempty and unique for all non-codewords $u$ in $K_{n} \square K_{m}$ except $(n,m)$ (for which we have $I(C';(n,m)) = \emptyset$). Furthermore, there does not exist a non-codeword $v$ such that $I(C';v) = C'$. Therefore, as in the proof of Lemma~\ref{LD Kq x Kp}, we obtain that $C'$ is a locating-dominating code in $K_{n} \times K_{m}$. Thus, we have $\LD(K_n\times K_m) = \LD(K_n\square K_m) - 1$.

Now majority of the cases have been considered, and we only have some special cases left. Concluding the proof, these cases are solved as follows:
\begin{itemize}
\item Assume that $n=2$ and $m\leq 4$. It is easy to see that $C=P_1$ is a locating-dominating code in $K_n\times K_m$. For the lower bound, first recall that $K_n\times K_m$ has at most one row without codewords (by the first paragraph of the proof). Therefore, if $C$ is a locating-dominating code in $K_n\times K_m$ with $|C| \leq m-1$, then all the codewords lie on different rows. Hence, in all the cases, there exist a non-codeword with an empty $I$-set. Thus, we have $\LD(K_n\times K_m) = m$.
\item Assume that $n = m = 4$. By Lemma~\ref{LD Kq x Kp}, we immediately have $4 \leq \LD(K_4\times K_4) \leq 5$. Let $C$ be a locating-dominating code in $K_4\times K_4$. As in the second paragraph of the proof, it can be shown that either $|C| \geq \LD(K_4\square K_4) = 5$ (and we are done), or $C$ is locating-dominating in $K_3\times K_3$ with the additional properties~(i), (ii) and (iii). In the latter case, due to (i), (ii) and (iii), there exist a row and a column of $K_3\times K_3$ with two codewords such that their intersection is a non-codeword $u$. Hence, a contradiction follows since $I(K_4\times K_4, C; u) = \emptyset$. Thus, we have $\LD(K_4\times K_4) = 5$.
\end{itemize}
\end{proof}
\end{theorem}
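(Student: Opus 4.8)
The plan is to derive everything from the sandwich inequality of Lemma~\ref{LD Kq x Kp}, namely $\LD(K_n\square K_m)-1\le\LD(K_n\times K_m)\le\LD(K_n\square K_m)$ for $(n,m)\neq(2,4)$, together with the explicit values of $\LD(K_n\square K_m)$ from Theorem~\ref{cartesian LD}, and then to decide, regime by regime, which of the two endpoints of the sandwich is attained. I would first record the crude lower bound $\LD(K_n\times K_m)\ge m-1$: since $N[(i,j)]$ in the direct product equals $(V\setminus(R_j\cup P_i))\cup\{(i,j)\}$, two codeword-free rows $R_i\neq R_j$ (or two codeword-free columns) would yield two non-codewords with identical $I$-sets, so $K_n\times K_m$ has at most one codeword-free row and at most one codeword-free column, forcing $|C|\ge m-1$. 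This already settles the regime $2n\le m$, $(n,m)\neq(2,4)$: there Theorem~\ref{cartesian LD} gives $\LD(K_n\square K_m)=m-1$, and the sandwich pins $\LD(K_n\times K_m)=m-1$.

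For the range $2<n\le m<2n$, where $\LD(K_n\square K_m)=\lceil(2n+2m)/3\rceil-1$, I would split on $n+m\bmod 3$, proving the lower bound through the following dichotomy. Let $C$ be a locating-dominating code in $K_n\times K_m$. Using $I(K_n\square K_m;u)=C\setminus I(K_n\times K_m;u)$ for non-codewords $u$ (so $I(K_n\square K_m;u)=\emptyset$ exactly when some row and column are jointly codeword-free), if no row and column are jointly empty then $C$ is already locating-dominating in $K_n\square K_m$, giving $|C|\ge\LD(K_n\square K_m)$. Otherwise, after relabeling, $R_m$ and $P_n$ are empty, $C$ is locating-dominating in $K_{n-1}\square K_{m-1}$, and it has the extra properties that (i) every column is nonempty, (ii) every row is nonempty, and (iii) no codeword is isolated in its row-plus-column — property (iii) because an isolated $(i,j)$ would make $(n,j)$ and $(i,m)$ share the $I$-set $C\setminus\{(i,j)\}$ in $K_n\times K_m$. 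Letting $s_p,s_r$ count the columns and rows of $K_{n-1}\square K_{m-1}$ with exactly one codeword, (i)--(iii) give $|C|\ge s_p+s_r\ge 2(n-1)+2(m-1)-2|C|$, hence $|C|\ge\lceil(2n+2m-1)/3\rceil-1$, exactly as in \cite[Theorem~14]{JLLrntcld}. When $n+m\equiv0,1\pmod3$ this lower bound equals $\LD(K_n\square K_m)$, so with the sandwich's upper bound we obtain $\LD(K_n\times K_m)=\LD(K_n\square K_m)=\lceil(2n+2m-1)/3\rceil-1$.

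When $n+m\equiv2\pmod3$ (and $(n,m)\neq(4,4)$), the sandwich already gives $\LD(K_n\times K_m)\ge\LD(K_n\square K_m)-1=\lceil(2n+2m-1)/3\rceil-1$, so only a matching construction is needed. I would exhibit the explicit staircase code $C'=A_1\cup A_2\cup A_3$ of Figure~\ref{10x10 LD}, with $n'=n-1$, $m'=m-1$, of size $(2n'+2m')/3=\lceil(2n+2m-1)/3\rceil-1$, and verify that as a code in $K_n\square K_m$ it covers every vertex except the corner $(n,m)$ with a nonempty and unique $I$-set (the corner getting $I(C';(n,m))=\emptyset$) and that no non-codeword has $I$-set equal to $C'$; then the complementation argument in the proof of Lemma~\ref{LD Kq x Kp} upgrades $C'$ to a locating-dominating code of $K_n\times K_m$. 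I expect this verification to be the main obstacle: the covering pattern of $C'$ is not a simple periodic one, so one must carefully split the non-codewords into the two ``off-diagonal rectangle'' blocks $B_1,B_2$ (where each non-codeword sees at least three codewords, not all on one line, forcing a unique $I$-set) and the boundary row $R_m$ and column $P_n$, and check that no collisions occur between these families.

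Finally I would dispatch the small exceptions. For $n=2$ and $m\le4$: $C=P_1$ is locating-dominating in $K_2\times K_m$ (each $(2,j)$ has $I$-set $P_1\setminus R_j$, all distinct and nonempty), so $\LD\le m$; and a short case check, using that at most one row is codeword-free and hence a size-$(m-1)$ code would force the codewords onto distinct rows, shows that some non-codeword then has an empty $I$-set, giving $\LD(K_2\times K_m)=m$. For $n=m=4$: the sandwich gives $4\le\LD(K_4\times K_4)\le5$, and $4$ is ruled out because a size-$4$ locating-dominating code would, by the dichotomy above, either be locating-dominating in $K_4\square K_4$ (impossible since $\LD(K_4\square K_4)=5$) or live in $K_3\square K_3$ with properties (i)--(iii); but four codewords in $K_3\square K_3$ with every row and column nonempty and no isolated codeword force a row and a column each carrying two codewords whose intersection is a non-codeword $u$, and then all four codewords lie on that row-column cross, so $I(K_4\times K_4;u)=\emptyset$ — a contradiction. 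Hence $\LD(K_4\times K_4)=5$, completing all cases.
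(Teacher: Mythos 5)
Your proposal follows essentially the same route as the paper's proof: the same $m-1$ lower bound from the fact that at most one row and one column can be codeword-free, the same dichotomy (via $I(K_n\square K_m;u)=C\setminus I(K_n\times K_m;u)$) reducing to a locating-dominating code in $K_{n-1}\square K_{m-1}$ with properties (i)--(iii) and the counting bound $\lceil(2n+2m-1)/3\rceil-1$, the same staircase construction $A_1\cup A_2\cup A_3$ for $n+m\equiv2\pmod 3$, and the same treatment of the exceptional cases $n=2,m\le4$ and $n=m=4$. The only step you leave partly unverified --- the check that the staircase code has unique nonempty $I$-sets everywhere except the corner $(n,m)$ --- is carried out in the paper exactly via the $B_1,B_2$ plus boundary-row-and-column decomposition you describe, so your outline matches the paper's argument in full.
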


Let us next briefly consider solid-location-domination. The following result has been shown in~\cite{JLLrntcld}.
\begin{theorem}[\cite{JLLrntcld}]\label{cartesian DLD}
For all integers $m$ and $n$ such that $m \geq n\geq 1$, we have
$$\GSLD(K_n \square K_m)=\begin{cases}
 m, & 4\leq 2n\leq m \textnormal{ or } n= 2, \
\cr 2n, & 2<n < m <2n,
\cr 2n-1, & 2<m=n.
\end{cases}$$
\end{theorem}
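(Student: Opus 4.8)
The plan is to reduce everything to the characterisation in Theorem~\ref{ThmCharacterizationSLDILD}(ii) together with the basic fact that in $K_n\square K_m$ the closed neighbourhood of $(i,j)$ is $R_j\cup P_i$. The first step is a \emph{local lemma}: for a non-codeword $u=(i,j)$ put $A=C\cap R_j$ and $B=C\cap P_i$, so $I(C;u)=A\cup B$; then $\bigcap_{c\in A}N[c]$ equals $R_j$ if $|A|\ge 2$ and $R_j\cup P_a$ if $A=\{(a,j)\}$, and symmetrically for $B$. Intersecting these, the solid-locating-domination condition at $u$ becomes exactly: $A\cup B\ne\emptyset$; if one of $A,B$ is empty, say $B=\emptyset$, then $(\bigcap_{c\in A}N[c])\setminus C=\{u\}$; and if $A=\{(a,j)\}$ and $B=\{(i,b)\}$ are both singletons, then the corner vertex $(a,b)$ is a codeword. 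When $|A|,|B|\ge 1$ and not both are singletons, the condition holds automatically.

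For the upper bounds I would give one construction per case and verify it against the local lemma. If $2n\le m$: a code with exactly one codeword per row and at least two per column (possible as $m\ge 2n$), so every non-codeword has $|A|=1$, $|B|\ge 2$; this has $m$ codewords, and when $n=2$ the single column $C=P_2$ already works for every $m\ge 2$. If $2<n<m<2n$: a code with exactly two codewords per column and at least one per row (possible as $m<2n$), giving $2n$ codewords. If $m=n\ge 4$: the code $\{(1,1),(n,1),(1,n)\}$ together with a $2$-regular bipartite pattern of $2(n-2)$ codewords on the inner board $\{2,\dots,n-1\}^2$; the only non-codeword whose row and column are both covered by a single codeword is $(n,n)$, whose corner $(1,1)$ is a codeword, so the local lemma is met; this has $2n-1$ codewords, and $n=3$ is settled by the explicit code $\{(1,1),(1,2),(2,1),(2,2),(3,3)\}$.

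For the matching lower bounds I would assume $|C|$ is one less than the claimed value and derive a contradiction. Set aside the easy case $n=2$, where $C=P_2$ is optimal by a short direct count; so assume $n\ge 3$, hence $n,m\ge 3$. Then every non-codeword of $C$ has an $I$-set of size at least $2$ (a single-codeword $I$-set would, through the ``almost-full line'' clause, force $n-1$ or $m-1$ codewords into one line). In particular an empty line forces its transversal lines to be nearly full, overshooting the target; so $C$ has no empty line, and for $2n\le m$ this already gives $|C|\ge m$. Otherwise $n\le m<2n$; let $s_r,s_c$ be the numbers of rows and of columns meeting $C$ in exactly one vertex. If $s_c=0$ then $|C|\ge 2n$ and if $s_r=0$ then $|C|\ge 2m\ge 2n$, so assume $s_r,s_c\ge 1$. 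I would then prove $C$ has no \emph{isolated} codeword (one alone in its row and alone in its column), since the corner clause would fail at a neighbouring non-codeword; writing $A^{*}$ for the columns holding the unique codewords of the $s_r$ singleton rows and $B^{*}$ for the rows holding the unique codewords of the $s_c$ singleton columns, the corner clause over all such pairs forces $A^{*}\times B^{*}\subseteq C$. Counting $|C|$ by columns (each column of $A^{*}$ also carrying the row-tips aimed at it) gives $|C|\ge|A^{*}||B^{*}|+s_r+s_c+2(n-|A^{*}|-s_c)$, and counting by rows gives the mirror inequality with $n$ replaced by $m$ and $A^{*},s_r$ swapped with $B^{*},s_c$. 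Adding the two, $s_r$ and $s_c$ cancel entirely, leaving $|C|\ge n+m-1+(|A^{*}|-1)(|B^{*}|-1)\ge n+m-1$, which is $\ge 2n$ when $n<m<2n$ and equals $2n-1$ when $m=n$ — the desired bounds.

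The step that will carry the real weight is the last one: pinning down the exact structural constraints (no empty line, no isolated codeword, the corner grid $A^{*}\times B^{*}\subseteq C$) and then setting up the two counting estimates symmetrically enough that adding them makes the uncontrolled terms $s_r,s_c$ disappear, leaving the manifestly nonnegative $(|A^{*}|-1)(|B^{*}|-1)$. The local lemma, the four constructions, and the small-case check $n=3$ are comparatively routine once this mechanism is in hand.
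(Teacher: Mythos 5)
This theorem is imported from~\cite{JLLrntcld} and the present paper gives no proof of it, so there is nothing in the text to compare your argument against; I have therefore checked it on its own terms. Your core mechanism is correct and does the real work: the local reformulation of Theorem~\ref{ThmCharacterizationSLDILD}(ii) in terms of $A=C\cap R_j$, $B=C\cap P_i$ (including the ``corner clause'' when both are singletons) is right, the four constructions all verify against it, the deduction $A^{*}\times B^{*}\subseteq C$ is valid once isolated codewords are excluded (note that $a\in A^{*}$ and $b\in B^{*}$ with witnesses in column $a$ and row $b$ force the witnessing lines to cross at a non-codeword precisely because an isolated codeword is forbidden), and the two column/row counts do add up to $|C|\ge n+m-1+(|A^{*}|-1)(|B^{*}|-1)$, which closes both remaining cases.

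Two preliminary claims are, however, under-justified exactly in the boundary situations where they are not automatic. First, ``every non-codeword has $|I|\ge 2$'': a singleton $I$-set $\{(a,j)\}$ at $u=(i,j)$ forces $C\supseteq(R_j\cup P_a)\setminus\{u\}$, i.e.\ $|C|\ge n+m-2$, and this is \emph{not} larger than your assumed $|C|$ when $m=n$ (where $n+m-2=2n-2$) or $m=n+1$ (where $n+m-2=2n-1$). In those cases $C$ would equal $(R_j\cup P_a)\setminus\{u\}$ exactly, and you must look at a second non-codeword (e.g.\ $(i,j')$ with $j'\ne j$, whose $I$-set is the singleton $\{(a,j')\}$ but whose row contains $n-1\ge 2$ non-codewords) to reach a contradiction; the same care is needed in the ``empty line'' step for $m=n$. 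Second, your justification for ``no isolated codeword'' --- that the corner clause fails at a neighbouring non-codeword --- is not quite right: the corner clause is only invoked at a neighbour whose other line is a singleton, and nothing forces such a neighbour to exist. What is true is a dichotomy: an isolated codeword $(a,b)$ forces every row $R_j$ with $j\ne b$ (and every column $P_i$ with $i\ne a$) to contain at least two codewords, since a singleton $C\cap R_j=\{(a',j)\}$ would require the corner $(a',b)\in C$, impossible; this yields $|C|\ge 2\max\{n,m\}-1$, which overshoots your assumed cardinality in both remaining cases. Both repairs are short, so I regard the proposal as essentially sound; note finally that neither your argument nor the quoted case distinction covers $n=1$, where $K_1\square K_m\cong K_m$ and $\GSLD(K_m)=m-1$, so that hypothesis is presumably a misprint for $n\ge 2$.
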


In the following theorem, we show that the cardinalities of optimal solid-locating-dominating codes are same for $K_n\times K_m$ and $K_n\square K_m$.
\begin{theorem}
For all integers $m$ and $n$ such that $m \geq n\geq 2$, we have
$$\GSLD(K_n\times K_m)=\GSLD(K_n\square K_m).$$
\begin{proof}
By \cite[Theorem 21]{SLDDLDgraafit}, we have $\GSLD(G) = \GSLD(\overline{G})$ if $G$ is not a discrete or a complete graph. Therefore, as this is the case for $G = K_n\times K_m$, we have $\GSLD(K_n\times K_m) = \GSLD(\overline{K_n\times K_m}) = \GSLD(K_n\square K_m)$.
\end{proof}
\end{theorem}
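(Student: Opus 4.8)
The plan is to reduce the statement to a known complementation identity for solid-location-domination, exploiting the fact recorded earlier in this section that $\overline{K_n\square K_m}=K_n\times K_m$. The key external input is \cite[Theorem 21]{SLDDLDgraafit}, which asserts that $\GSLD(G)=\GSLD(\overline{G})$ for every graph $G$ that is neither a discrete graph (no edges) nor a complete graph. So essentially no new combinatorial work on codes in $K_n\times K_m$ is required: the hard analysis for the Cartesian side is already packaged in Theorem~\ref{cartesian DLD}, and the complementation theorem transports it across.

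First I would check that $G=K_n\times K_m$ meets the hypotheses of that theorem for all $m\geq n\geq 2$. The graph $G$ is not discrete: since $n,m\geq 2$ we may pick distinct vertices $1,2\in V(K_n)$ and distinct vertices $1,2\in V(K_m)$, and then $(1,1)(2,2)\in E(K_n\times K_m)$ by the definition of the direct product. The graph $G$ is also not complete: the vertices $(1,1)$ and $(1,2)$ share their first coordinate, and since $K_n$ has no loops we have $(1,1)(1,2)\notin E(K_n\times K_m)$. (Alternatively, one can observe that $\overline{G}=K_n\square K_m$ is connected with at least two vertices and is not complete, which already forces $G$ to be neither complete nor discrete.)

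With the hypotheses verified, the conclusion is one line: $\GSLD(K_n\times K_m)=\GSLD(\overline{K_n\times K_m})=\GSLD(K_n\square K_m)$, where the last equality is just the identity $\overline{K_n\square K_m}=K_n\times K_m$ noted above. The only thing that needs any care at all is disposing of the two degenerate graph classes excluded in \cite[Theorem 21]{SLDDLDgraafit}, and for $m\geq n\geq 2$ both are ruled out immediately; I do not anticipate any genuine obstacle beyond that bookkeeping.
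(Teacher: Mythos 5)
Your proposal is correct and follows exactly the same route as the paper: apply the complementation identity $\GSLD(G)=\GSLD(\overline{G})$ from \cite[Theorem 21]{SLDDLDgraafit} together with $\overline{K_n\times K_m}=K_n\square K_m$. Your explicit verification that $K_n\times K_m$ is neither discrete nor complete for $m\geq n\geq 2$ is a small but welcome addition that the paper leaves implicit.
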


Let us then consider self-location-domination. Unlike location-domination~\cite[Theorem 7]{complementLD} and solid-location-domination~\cite[Theorem 21]{SLDDLDgraafit}, the optimal cardinality of a self-locating-dominating code in $G$ does not depend on the one of the complement graph $\overline{G}$. In the following theorem, we first give the result presented in~\cite{JLLrntcld} regarding $\SLD(K_n \square K_m)$.
\begin{theorem}[\cite{JLLrntcld}]\label{SLD}
For all integers $m$ and $n$ such that $m \geq n\geq 2$, we have
$$\SLD(K_n \square K_m)=\begin{cases}
 m, & 2n\leq m, \
\cr 2n, & 2\leq n<m<2n,
\cr 2n-1, & 2<m=n,
\cr 4, & n=m=2.
\end{cases}$$\end{theorem}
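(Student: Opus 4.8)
Throughout I would work with the rook's-graph description $N[(i,j)]=R_j\cup P_i$ of $K_n\square K_m$ together with the characterization of self-locating-dominating codes in Theorem~\ref{ThmCharacterizationSLDILD}(i). For a code $C$, write $r_j=|R_j\cap C|$ for the number of codewords in row $R_j$ and $p_i=|P_i\cap C|$ for the number of codewords in column $P_i$. The crucial first step is the following numerical reformulation: \emph{$C$ is self-locating-dominating in $K_n\square K_m$ if and only if $r_j\ge 1$ and $p_i\ge 1$ for all $i,j$ and no non-codeword $(i,j)$ satisfies $(r_j,p_i)=(1,1)$}. To prove it, fix a non-codeword $u=(i,j)$; then $I(C;u)=(R_j\cap C)\cup(P_i\cap C)$ and no codeword lies in both $R_j$ and $P_i$ (it would equal $u$). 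Using $N[(a,j)]=R_j\cup P_a$ and $N[(i,b)]=R_b\cup P_i$, one checks that $\bigcap_{c\in R_j\cap C}N[c]$ equals $R_j$ when $r_j\ge 2$ and $R_j\cup P_a$ when $R_j\cap C=\{(a,j)\}$, and similarly for the column; intersecting the two halves shows that $\bigcap_{c\in I(C;u)}N[c]=\{u\}$ exactly when $r_j\ge 1$, $p_i\ge 1$ and $(r_j,p_i)\ne(1,1)$, equals a two-element set $\{u,(a,b)\}$ when $r_j=p_i=1$, and otherwise is empty or contains a whole row or column. Since any row with $r_j<n$ and any column with $p_i<m$ contains a non-codeword, we also get $r_j,p_i\ge 1$ for every row and column. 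I expect this reduction --- and in particular the case analysis on how many codewords of $I(C;u)$ lie in the row versus the column of $u$ --- to be the main technical obstacle; after it, everything is counting.

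For the lower bounds, set $A=\{j:r_j=1\}$, $B=\{i:p_i=1\}$, $a=|A|$, $b=|B|$. If $j\in A$ and $i\in B$ then $(i,j)\in C$ (else it is a non-codeword with $(r_j,p_i)=(1,1)$); since the unique codeword in a row of $A$ can meet at most one column of $B$, this gives $a\ge 1\Rightarrow b\le 1$ and symmetrically $b\ge 1\Rightarrow a\le 1$. Together with $|C|=\sum_j r_j\ge 2m-a$ and $|C|=\sum_i p_i\ge 2n-b$, this settles all cases. If $2n\le m$ then $|C|\ge m$ because every row is nonempty. If $n<m<2n$, suppose $|C|\le 2n-1$: then $a\ge 1$ (else $|C|\ge 2m\ge 2n+2$), hence $b\le 1$; then $b=1$ (else $|C|\ge 2n$), hence $a\le 1$, so $a=1$; but then $|C|\ge 2m-1\ge 2n+1$, a contradiction, so $|C|\ge 2n$. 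The same chain with $m=n\ge 3$ and $|C|\le 2n-2$ yields a contradiction, so $|C|\ge 2n-1$. For $n=m=2$, a code of size at most $3$ either fails to dominate (some row or column empty) or leaves a non-codeword with $(r_j,p_i)=(1,1)$, so $|C|\ge 4$.

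For the matching upper bounds, I would exhibit explicit codes and check them against the reformulation. If $2n\le m$, put exactly one codeword in each row so that every column receives at least two codewords (possible since $m\ge 2n$); then $r_j=1$ and $p_i\ge 2$ everywhere, so the forbidden pattern never occurs and $|C|=m$. If $n\le m\le 2n-1$, put exactly two codewords in every column in a way that meets every row at least once (possible since $m\le 2n$); then $p_i=2$ everywhere, $|C|=2n$, and the condition is again satisfied. If $m=n\ge 3$, refine the last construction: place a codeword at $(1,1)$ --- which becomes the unique codeword of row $R_1$ and of column $P_1$ --- and fill the $(n-1)\times(n-1)$ block on the remaining rows and columns with a single cycle of $2(n-1)$ codewords, so that every other row and column has exactly two codewords; then $|C|=1+2(n-1)=2n-1$ and the only cell with $(r_j,p_i)=(1,1)$ is the codeword $(1,1)$ itself. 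Finally, for $n=m=2$ the whole vertex set works (and by the lower bound nothing smaller does), so $|C|=4$. Matching the lower and upper bounds in each range gives the stated formula for $\SLD(K_n\square K_m)$.
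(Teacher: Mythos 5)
Your proof is correct. Note that the paper you are working against does not actually prove this statement: Theorem~\ref{SLD} is quoted from \cite{JLLrntcld} as a known ingredient, so there is no in-paper argument to compare yours to. Judged on its own, your reduction is sound: for a non-codeword $u=(i,j)$ the set $\bigcap_{c\in I(C;u)}N[c]$ is indeed $\{u\}$ exactly when $r_j,p_i\geq 1$ and $(r_j,p_i)\neq(1,1)$, and equals $\{u,(a,b)\}$ in the case $r_j=p_i=1$ with $R_j\cap C=\{(a,j)\}$, $P_i\cap C=\{(i,b)\}$; the domination of all rows and columns follows as you say because any row or column that is not entirely codewords contains a non-codeword forcing it. The counting with $a=|\{j:r_j=1\}|$, $b=|\{i:p_i=1\}|$, the implications $a\geq1\Rightarrow b\leq1$ and $b\geq1\Rightarrow a\leq1$, and the bounds $|C|\geq 2m-a$, $|C|\geq 2n-b$ settle each range correctly (including the boundary subcases $m=n$ and $n=m=2$), and the three constructions meet the bounds and visibly avoid the forbidden $(1,1)$ pattern at non-codewords. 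This row/column bookkeeping is also the same style of argument the present paper uses for its own results on $K_n\times K_m$ in Section~\ref{secDirProd}, so your route is entirely in keeping with the source material.
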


In the following theorem, we determine the exact values of $\SLD(K_n \times K_m)$ for all values of $m$ and $n$. Notice that $\SLD(K_n\square K_m)=\SLD(K_n\times K_m)$ if and only if $n=m$, $m=n+1>3$, or $n=2$ and $m\geq4$.
\begin{theorem}
For all integers $m$ and $n$ such that $m \geq n\geq 2$, we have
$$\SLD(K_n\times K_m)=\begin{cases}m+n-1,&n>2,\\
m,& n=2, m>2,\\
4,&n=m=2.\end{cases}$$
\begin{proof}
Let $C$ be a self-locating-dominating code in $K_n\times K_m$. Notice first that if $n = m =2$, then $K_2\times K_2$ is isomorphic to a forest of two paths of length two and, therefore, $\SLD(K_2\times K_2) = 4$. Hence, we may assume that $(n,m) \neq (2,2)$. Observe then that if a column $P_i$ contains no codewords, i.e., $P_i\cap C=\emptyset$, then $C = V \setminus P_i$. Indeed, for any vertices $(i,j) \in P_i$ and $(h,j) \in V$ with $i \neq h$, we have $I(h,j)\subseteq I(i,j)$ and the claim $C = V \setminus P_i$ follows by Theorem~\ref{ThmCharacterizationSLDILD}. Analogously, it can be shown that if $R_i\cap C=\emptyset$, then $C = V \setminus R_i$. Suppose now that $n = 2$ and $m > 2$. If each row contains a codeword, then we immediately have $|C| \geq m$. Otherwise, there exists a row without codewords and, by the previous observation, we have $|C| \geq 2m - 2 \geq m$. Hence, we obtain that $|C| \geq m$. Furthermore, $P_1$ is a self-locating-dominating code in $K_2\times K_m$ with $m$ codewords. Thus, in conclusion, we have $\SLD(K_2\times K_m) = m$.

Assume that $n > 2$. By the previous observations, we know that if there exists a row or a column without codewords, then $|C| \geq \min\{mn-m, mn-n\} = mn-m \geq m+n-1$. Hence, we may assume that each row and column contains a codeword of $C$. Furthermore, if each row contains at least $2$ codewords, then $|C| \geq 2m \geq m+n-1$. Hence, we may assume that there exists a row $R_i$ with exactly one codeword, i.e., $R_i\cap C = \{(j,i) \}$ for some $j$. Hence, as $I(j,h)\subseteq I(j,i)$ for any $h \neq i$, we have $P_j\subseteq C$. Therefore, as each column different from $P_j$ also contains a codeword, we obtain that $|C| \geq m+n-1$. Thus, we have $\SLD(K_n\times K_m) \geq m+n-1$. Finally, this lower bound can be attained with a code $C'=\{(i,j)\mid i=1\text{ or } j=1\}$. Indeed, for any $i,j > 1$, we have $I(1,1)=\{(1,1)\}$, $I(1,j)=\{(1,j)\}\cup (R_1\setminus\{(1,1)\})$, $I(j,1)=\{(j,1)\}\cup (P_1\setminus\{(1,1)\})$ and $I(i,j)= C' \setminus \{(1,j),(i,1)\}$. Therefore, we have $I(v)\not\subseteq I(u)$ for any vertex $u$ and non-codeword $v$. Thus, by Theorem~\ref{ThmCharacterizationSLDILD}, $C'$ is a self-locating-dominating code in $K_n\times K_m$, and we have $\SLD(K_n\times K_m) = n + m - 1$.
\end{proof}
\end{theorem}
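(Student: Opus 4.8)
The plan is to prove the three-case formula for $\SLD(K_n\times K_m)$ by establishing matching lower and upper bounds in each case, using the characterization in Theorem~\ref{ThmCharacterizationSLDILD}(i), namely that $C$ is self-locating-dominating iff $I(v)\not\subseteq I(u)$ for every non-codeword $v$ and every vertex $u$. The crucial structural fact to extract first is that in $K_n\times K_m$ we have $N(i,j)=V\setminus(R_j\cup P_i)$, so $I(i,j)=C\setminus(R_j\cup P_i)$ for a non-codeword $(i,j)$, while for a codeword $I$ includes the vertex itself. This immediately gives the key "collapse" lemma: if some column $P_i$ contains no codeword, then for any non-codeword $(i,j)\in P_i$ and any vertex $(h,j)$ with $h\neq i$ we have $I(h,j)=C\setminus(R_j\cup P_h)\subseteq C\setminus(R_j\setminus P_i)=I(i,j)$ (since $P_i\cap C=\emptyset$), forcing $(h,j)\in C$; ranging over all such $h$ and $j$ yields $C=V\setminus P_i$. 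The symmetric statement holds for empty rows. This lemma does most of the work.

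First I would dispose of the degenerate case $n=m=2$: here $K_2\times K_2$ is disconnected (two paths on two edges), so every vertex must be a codeword and $\SLD=4$. Next, for $n=2$ and $m>2$: if every row meets $C$ then $|C|\geq m$; otherwise some row is codeword-free and the collapse lemma gives $|C|\geq 2m-2\geq m$. For the upper bound, I would check directly that $C=P_1$ is self-locating-dominating — for a non-codeword $(2,j)$ one computes $I(2,j)=P_1\setminus\{(1,j)\}$ and verifies this is contained in no other $I$-set — giving $\SLD(K_2\times K_m)=m$.

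The main case is $n>2$, target value $m+n-1$. For the lower bound, let $C$ be self-locating-dominating. If some row or column is codeword-free, the collapse lemma gives $|C|\geq\min\{mn-m,mn-n\}=mn-m\geq m+n-1$ (using $n>2$, i.e. $m(n-1)\geq m+n-1$ iff $(m-1)(n-2)\geq 1$). So assume every row and column meets $C$. If every row has at least two codewords, $|C|\geq 2m\geq m+n-1$. Otherwise some row $R_i$ has exactly one codeword $(j,i)$; then for any $h\neq i$, $I(j,h)=C\setminus(R_h\cup P_j)$, and since $R_i\cap C=\{(j,i)\}$ with $(j,i)\notin R_h$ we get $I(j,h)\subseteq I(j,i)$ — wait, $(j,i)$ is a codeword, so instead I argue via the characterization that $P_j\subseteq C$ (if some $(j,h)$ were a non-codeword, its $I$-set sits inside that of $(j,i)$ in the appropriate sense, contradiction). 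With a full column $P_j$ and a codeword in each of the other $m-1$ columns, $|C|\geq n+(m-1)=m+n-1$. For the upper bound I would exhibit $C'=\{(i,j)\mid i=1\text{ or }j=1\}$, which has $n+m-1$ codewords, and verify via Theorem~\ref{ThmCharacterizationSLDILD}(i) that $I(v)\not\subseteq I(u)$ for every non-codeword $v=(i,j)$ with $i,j>1$: here $I(i,j)=C'\setminus(R_j\cup P_i)=C'\setminus\{(1,j),(i,1)\}$, and one checks against $I(1,1)=\{(1,1)\}$, $I(1,h)=\{(1,h)\}\cup(R_1\setminus\{(1,1)\})$, $I(h,1)=\{(h,1)\}\cup(P_1\setminus\{(1,1)\})$, and $I(i',j')$ for other interior vertices, that no containment occurs.

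The main obstacle is the bookkeeping in the $n>2$ lower bound: the case split (empty line vs. all rows heavy vs. a light row) must be organized so the light-row subcase cleanly produces a full column via the characterization, and one must be careful that $I(j,h)\subseteq I(j,i)$ is applied correctly given that $(j,i)$ is a codeword while $(j,h)$ may or may not be — the clean statement is that $R_i\cap C=\{(j,i)\}$ forces $P_j\subseteq C$, proved by contradiction using Theorem~\ref{ThmCharacterizationSLDILD}(i). The inequality $mn-m\geq m+n-1$ for $n>2$ should be recorded explicitly as it is used twice. Everything else is routine verification of $I$-sets in $K_n\times K_m$.
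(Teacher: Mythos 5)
Your proposal is correct and follows essentially the same route as the paper: the same ``collapse'' observation that an empty column (or row) forces $C=V\setminus P_i$ (resp.\ $V\setminus R_i$), the same case analysis for $n=2$ and for $n>2$ (empty line, all rows with two codewords, or a single-codeword row forcing a full column via Theorem~\ref{ThmCharacterizationSLDILD}(i)), and the same optimal construction $C'=\{(i,j)\mid i=1\text{ or }j=1\}$. The subtlety you flagged about $(j,i)$ being a codeword is resolved exactly as you suggest, since the characterization allows $v$ to be any vertex, codeword or not.
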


\section{On certain type of Hamming graphs}\label{Hamming}

The Cartesian product $K_q\square K_q\square\cdots \square K_q$ of $n$ copies of $K_q$ is denoted by $K_q^n$ and called a Hamming graph. Goddard and Wash  \cite{GWIDcpg} studied identification in the case of $K_q^n$ and they, in particular, bounded the cardinality of an optimal identifying code to $q^2-q\sqrt{q}\leq \gamma^{ID}(K_q^3)\leq q^2$. In \cite{CubeCon}, we further improved this bound to $q^2-\frac{3}{2}q\leq \gamma^{ID}(K_q^3)\leq q^2-4^{t-1}$ where $2\cdot 4^t\leq q \leq 2\cdot 4^{t+1}-1$ or $q=4^t$, and we also showed that $\gamma^{SLD}(K^3_q)=q^2$. In this section, we show that also $\GSLD(K^3_q)=q^2$.

The following lemma is presented as Exercise 1.12 in \cite{DomInGraphs}.
\begin{lemma}[\cite{DomInGraphs}]\label{dominationNumber}
For each positive integer $q$, we have $$\gamma(K_q\square K_q)=q.$$
\end{lemma}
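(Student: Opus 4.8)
The final statement to prove is Lemma~\ref{dominationNumber}: $\gamma(K_q \square K_q) = q$.

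\medskip

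The plan is to establish the equality by proving the two inequalities separately. For the upper bound $\gamma(K_q\square K_q)\leq q$, I would exhibit an explicit dominating set of size $q$: take a transversal, for instance $C=\{(i,i)\mid 1\leq i\leq q\}$ (a ``diagonal''). Since in $K_q\square K_q$ the closed neighbourhood $N[(i,j)]$ consists exactly of the row $R_j$ and the column $P_i$, every vertex $(a,b)$ lies in the column $P_a$, which contains the codeword $(a,a)$; hence $(a,b)\in N[(a,a)]$ and $C$ dominates. This gives $\gamma(K_q\square K_q)\leq q$.

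\medskip

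For the lower bound $\gamma(K_q\square K_q)\geq q$, suppose $C$ is a dominating set. Each codeword $(i,j)\in C$ dominates only the $2q-1$ vertices of $R_j\cup P_i$. I would argue via rows and columns: if $C$ has codewords in at most $q-1$ distinct rows and also fails to cover some row entirely, a vertex is missed. More cleanly, let $r$ be the number of distinct rows meeting $C$ and $p$ the number of distinct columns meeting $C$; then $|C|\geq\max\{r,p\}$. If some row $R_j$ contains no codeword, then every vertex of $R_j$ must be dominated through its column, so every column meets $C$, giving $p=q$ and $|C|\geq q$. Symmetrically if some column is codeword-free. Otherwise every row and every column meets $C$, so $r=p=q$ and again $|C|\geq q$. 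Hence $\gamma(K_q\square K_q)\geq q$, and combined with the upper bound we get equality.

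\medskip

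I expect no serious obstacle here; the result is elementary and the only mild subtlety is making the lower-bound case analysis airtight, in particular handling the case where $C$ meets every row and every column simultaneously (where one simply observes $|C|\geq q$ since $q$ distinct rows are hit). Alternatively, one can cite the well-known fact that $K_q\square K_q$ is vertex-transitive and $q$-regular-in-each-direction and invoke a counting bound, but the direct row/column argument above is cleanest and self-contained.
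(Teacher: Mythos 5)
Your proof is correct. The paper itself gives no argument for this lemma --- it is simply cited as Exercise 1.12 of \cite{DomInGraphs} --- so your self-contained row/column proof fills that in: the diagonal $\{(i,i)\mid 1\leq i\leq q\}$ gives the upper bound, and the lower-bound case split (if every row meets $C$ then $|C|\geq q$; if some row misses $C$ then every column must meet $C$, again forcing $|C|\geq q$) is airtight and is the standard argument for the domination number of the rook's graph.
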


In the following we present some terminology and notations we use. More information about them and their usefulness can be found in \cite{CubeCon}.

\begin{itemize}

\item The \textit{pipe} $P^i(a,b)\subseteq V(K_q^3)$ is a set of vertices fixing all but the $i$th coordinate which varies between $1$ and $q$. The fixed coordinates are $a$ and $b$ where $a$ is the value of left fixed coordinate in the representation $(x,y,z)$. For example $P^3(a,b)=\{(a,b,i)\mid 1\leq i\leq q\}$.
\item The \textit{layer} $L^i_j\subseteq V(K_q^3)$ is a set of vertices fixing the $i$th coordinate as $j$. For example, the layer $L^1_j$ consists of pipes $P^{i}(j,b)$ for $i=2,3$ and $1\leq b\leq q$.
\item $C^i_j\subseteq L^i_j$ denotes the set of codewords in layer $L^i_j$, that is, for code $C\subseteq V(K_q^3)$ we have $C^i_j=C\cap L^i_j$.
\item $X^i_j\subseteq L^i_j$ denotes such non-codewords $v$ in $L^i_j$ that $I(C^i_j;v)=\emptyset$ and $X^i=\bigcup_{j=1}^q X^i_j$.
\item Let us denote $a^i_j=q-|C^i_j|$.
\item $M^i_j\subseteq L^i_j$ denotes the minimum dominating set of the induced subgraph $K^3_q[L^i_j]$ such that $C^i_j\subseteq M^i_j$. Note that $K^3_q[L^i_j]\simeq K_q\square K_q$ and hence, $|M^i_j|\geq q$.
\item Let us denote $f^i_j=|M^i_j|-q$. Note that $|X^i_j|\geq (f^i_j+a^i_j)^2$ and $f^i_j+a^i_j\geq0$ since $f^i_j=|M^i_j|-q\geq |C^i_j|-q=-a^i_j$, (\cite[pp. 15--16]{CubeCon}).
\end{itemize}

\begin{lemma}\label{f^i_j koko}
Let $C\subseteq V(K^3_q)$ and let $K_t\square K_t$ be a subgraph of $K^3_q[C^i_j]$ for some $i,j$. Then we have $f^i_j\geq t^2-t$.
\begin{proof}
We have $C^i_j\subseteq M^i_j$. Besides the vertices of $C^i_j$ inducing graph $K_t\square K_t$, there are $(q-t)^2$ vertices which are not dominated by the vertices of $C^i_j$. Moreover, we require at least $q-t$ vertices to dominate them. Hence, we have $|M^i_j|\geq t^2+(q-t)$ and thus, $f^i_j\geq t^2-t$.
\end{proof}
\end{lemma}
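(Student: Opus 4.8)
The plan is to argue directly from the definitions of $M^i_j$ and $f^i_j$. Recall that $M^i_j$ is a \emph{minimum} dominating set of the induced subgraph $K^3_q[L^i_j]\simeq K_q\square K_q$ subject to the constraint $C^i_j\subseteq M^i_j$, and that $f^i_j=|M^i_j|-q$. So to lower bound $f^i_j$ it suffices to lower bound $|M^i_j|$, i.e. to show that any dominating set of $K_q\square K_q$ that contains a copy of $K_t\square K_t$ must be large.

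First I would fix coordinates so that the assumed $K_t\square K_t$ subgraph of $K^3_q[C^i_j]$ sits on a $t\times t$ sub-grid of the $q\times q$ grid $K_q\square K_q$; say its vertices occupy rows $1,\dots,t$ and columns $1,\dots,t$ (after relabelling). The key observation is that in the rook's graph $K_q\square K_q$, a vertex $(a,b)$ is dominated by a codeword only if that codeword lies in row $a$ or column $b$. Hence the $(q-t)^2$ vertices $(a,b)$ with $t+1\le a,b\le q$ — which lie in none of the first $t$ rows or columns — are not dominated by any vertex of the embedded $K_t\square K_t$; they must be dominated by additional vertices of $M^i_j$ lying in rows $t+1,\dots,q$ or columns $t+1,\dots,q$.

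Next I would bound the number of such additional vertices needed. The induced subgraph of $K_q\square K_q$ on the $(q-t)\times(q-t)$ block of rows and columns $\{t+1,\dots,q\}$ is itself isomorphic to $K_{q-t}\square K_{q-t}$, and by Lemma~\ref{dominationNumber} its domination number is $q-t$; moreover a vertex outside this block can dominate at most one whole line (row or column) of it, so one still needs at least $q-t$ vertices of $M^i_j$ to cover those $(q-t)^2$ vertices — and these are all distinct from the $t^2$ vertices of the $K_t\square K_t$ copy, since the latter lie in the first $t$ rows. Therefore $|M^i_j|\ge t^2+(q-t)$, whence $f^i_j=|M^i_j|-q\ge t^2-t$, as claimed.

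The main point to get right is the counting step that $q-t$ extra vertices are genuinely required: one must observe that each added vertex, whether it lies in the block $\{t+1,\dots,q\}^2$ or outside it, can dominate at most one of the $q-t$ rows (equivalently one of the $q-t$ columns) of the uncovered $(q-t)\times(q-t)$ block, so $q-t$ of them are needed — and this is exactly the content of Lemma~\ref{dominationNumber} applied to $K_{q-t}\square K_{q-t}$. Everything else is bookkeeping, and the inequality $f^i_j+a^i_j\ge 0$ recorded in the bullet list guarantees this bound is consistent.
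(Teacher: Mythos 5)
Your argument is correct and follows the same route as the paper's proof: identify the $(q-t)^2$ vertices outside the rows and columns of the embedded $K_t\square K_t$, note they are undominated by it, and show $q-t$ further vertices are forced, giving $|M^i_j|\ge t^2+(q-t)$. You merely spell out the "at least $q-t$ additional dominators" step (via the standard row/column counting behind Lemma~\ref{dominationNumber}) that the paper leaves implicit.
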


\begin{lemma}[\cite{CubeCon}, Lemma 10]\label{I-set perusom}

\noindent
Let $C$ be a code in $K^3_q$ and $v$ be a vertex of $K^3_q$.

\begin{itemize}
\item If a vertex $v$ has two codewords in its $I$-set and they do not locate within a single pipe, then there is exactly one other vertex which has those two codewords in its $I$-set.
\item  The $I$-set $I(v)$ is not a subset of any other $I$-set if and only if there are at least three codewords in $I(v)$ and they do not locate within a single pipe.
\end{itemize}
\end{lemma}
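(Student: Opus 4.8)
The plan is to reduce both statements to one explicit computation of the intersection of two closed neighbourhoods. Recall that in $K_q^3$ the closed neighbourhood of $v=(v_1,v_2,v_3)$ is the union of the three pipes $P^1(v_2,v_3)$, $P^2(v_1,v_3)$, $P^3(v_1,v_2)$ through $v$; that two distinct vertices lie in a common pipe precisely when they agree in at least two coordinates, i.e.\ when they are adjacent; and that $|N[c]|=3q-2\geq4$ for every vertex $c$. The computation I would carry out first is the following. Suppose $c_1,c_2\in I(v)$ lie on two \emph{different} pipes through $v$. Then $c_1,c_2\neq v$, and after relabelling coordinates we may write $c_1=(a,v_2,v_3)$ with $a\neq v_1$ and $c_2=(v_1,b,v_3)$ with $b\neq v_2$; a short case analysis (e.g.\ splitting on whether a common neighbour $w$ has $w_3=v_3$) then shows $N[c_1]\cap N[c_2]=\{\,v,\ (a,b,v_3)\,\}$, a set of size exactly two. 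Everything else rests on this.

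The first bullet follows at once. If $c_1,c_2\in I(v)$ do not lie in a single pipe, they are non-adjacent, so $c_1,c_2\neq v$ (a codeword equal to $v$ would be adjacent to the other), each lies on exactly one pipe through $v$, and these two pipes are distinct; by the computation, $(a,b,v_3)$ is the only vertex besides $v$ whose closed neighbourhood --- hence $I$-set --- contains both $c_1$ and $c_2$. For the second bullet I would use throughout that ``$I(v)$ is a subset of no other $I$-set'' is equivalent to $\bigcap_{c\in I(v)}N[c]=\{v\}$: since $I(v)\subseteq C$, having $I(v)\subseteq N[w]$ is the same as $I(v)\subseteq I(w)$, and $v$ always lies in this intersection.

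For the forward implication of the second bullet, assume $|I(v)|\geq3$ with the codewords not all in one pipe. Then at least two distinct pipes through $v$ carry codewords (a codeword equal to $v$ lies on all three pipes, so if every codeword $\neq v$ shared a pipe, so would all of $I(v)$). Choosing $c_1,c_2\neq v$ on distinct pipes through $v$ and setting $w_0=(a,b,v_3)$, we get $\bigcap_{c\in I(v)}N[c]\subseteq N[c_1]\cap N[c_2]=\{v,w_0\}$; if $v\in C$ then $v\in I(v)$ but $v\notin N[w_0]$, while if $v\notin C$ a third codeword $c_3\notin\{c_1,c_2\}$ has $w_0\notin N[c_3]$ (from its explicit form, a codeword of $I(v)$ lies in $N[w_0]$ only if it equals $c_1$ or $c_2$); either way $\bigcap_{c\in I(v)}N[c]=\{v\}$. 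For the converse I prove the contrapositive by cases, in each exhibiting $w\neq v$ with $I(v)\subseteq I(w)$: if $|I(v)|\leq1$ this is immediate from $|N[c]|\geq4$; if $|I(v)|=2$ with the two codewords on distinct pipes through $v$, take $w=w_0$; and if $|I(v)|=2$, or $|I(v)|\geq3$, with all codewords in one pipe $\pi$, one checks $v\in\pi$ and then uses that $\pi$ is a clique on $q\geq2$ vertices to choose $w\in\pi\setminus\{v\}$.

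The main obstacle is the case bookkeeping in the converse: one must carefully cover $|I(v)|\in\{0,1,2\}$ and the degenerate case $v\in C$, and in particular verify that a pipe carrying three distinct codewords of $I(v)$ must pass through $v$ --- the three codewords take three distinct values in the varying coordinate, so they cannot all agree with $v$ there, which is exactly what would be forced if the pipe missed $v$. Once the neighbourhood-intersection computation is in place, the remainder is careful but routine case-checking.
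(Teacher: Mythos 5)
Your argument is correct: the computation $N[c_1]\cap N[c_2]=\{v,(a,b,v_3)\}$ for codewords on two distinct pipes through $v$ is verified easily, and your case analysis for the converse (including the check that a pipe containing two or more vertices of $N[v]$ must pass through $v$) is exhaustive and sound. The paper itself gives no proof of this statement --- it is imported verbatim as Lemma~10 of \cite{CubeCon} --- so there is nothing to compare against, but your proof is the natural self-contained one and can stand on its own.
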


\begin{theorem}\label{DLD kuutio}
We have for $q\geq2$ $$\gamma^{DLD}(K^3_q)=q^2.$$
\begin{proof}
We have shown in \cite{CubeCon} that $\gamma^{SLD}(K^3_q)=q^2$. Hence, we have $\GSLD(K^3_q)\leq q^2$ by Corollary \ref{CorollarySLdtoDLD}. Let us assume that $C$ is an optimal solid-locating-dominating code in $V(K^3_q)$ with $|C|<q^2$. Since $|C|<q^2$, we have a layer, say $L^3_1$, with at most $q-1$ codewords and hence, we have $|X^3_1|\geq1$ by Lemma \ref{dominationNumber}. Let us assume that $(1,1,1)\in X^3_1$. Now, we have $(i,1,1)\not\in C$ for any $i$ and the same is true for $(1,j,1)$ for any $j$. Moreover, if we have $(1,1,h)\not\in C$, then $I(1,1,1)\subseteq I(1,1,h)$, a contradiction. Therefore, for each non-codeword in $X^i_j$ we have a pipe with $q-1$ codewords. Let us denote a pipe with $q-1$ codewords as $P^i_C(a,b)$ where $i$ denotes the direction of the pipe and $(a,b)$ denotes the coordinates in which the pipe intersects with the layer. Note that if $(a,b,z)\in X^3_z$ and $(a,b,z')\in X^3_{z'}$, then $z=z'$.

Let us first note that we have \begin{equation}\label{q+2 putkea}
|\{P_C^i(a,b)\mid 1\leq a,b\leq q\}|\leq q+1
\end{equation} for any fixed $i\in \{1,2,3\}$. Otherwise, we would have $|C|\geq (q+2)(q-1)=q^2+q-2>q^2-1$. Let us then consider the case where we have only $q-t$, $t\geq2$, codewords in a layer, say $L^3_1$. Then we have $|X^3_1|\geq t^2$ and these vertices (or some subset of them) induce subgraph $K_t\square K_t$ on $K^3_q$. Therefore, we have at least $t^2$ copies of codeword pipes $P^3_C(a,b)$ and without loss of generality, we may assume that values $(a,b)$ form the set $\{(i,j)\mid 1\leq i,j\leq t\}$. Thus, some subset of the vertices in $C^3_j$, for any fixed $j$ such that $2\leq j\leq q$, form an induced subgraph $K_t\square K_t$. Therefore, we have $f^3_j\geq t^2-t$ for any $2\leq j \leq q$ by Lemma \ref{f^i_j koko}. Thus, we have $$|X^3|\geq t^2+\sum_{j=2}^q (f^3_j+a^3_j)^2\geq t^2+\sum_{j=2}^q f^3_j+\sum_{j=2}^q a^3_j\geq t^2-t+(q-1)(t^2-t)+1=q(t^2-t)+1\geq 2q+1.$$ Note that $\sum_{j=2}^q a^3_j\geq 1-t$ and if $(a,b,j)\in X^3_{j}$, then $(a,b,i)\not\in X^3_i$ for each $i\neq j$. However, this is a contradiction with (\ref{q+2 putkea}). Therefore, we have $|C^i_j|\geq q-1$ for any $i,j$.

Let us then consider the case where $|C^3_1|=q-1$ and $C^3_1$ induces a discrete graph. Then for any non-codeword $v=(a,b,1)$, we have $|N(v)\cap C^3_1|\leq 2$ and the codewords in $N(v)\cap C^3_1$ do not locate within the same pipe. Therefore, by Lemma \ref{I-set perusom}, we have another non-codeword $w\in L^3_1$ such that $N(v)\cap C^3_1\subseteq N(w)$. Furthermore, this means that there is a codeword in $P^3(a,b)$. Since this is true for any non-codeword and $|L^3_1|=q^2$, we have $|C|\geq q^2$, a contradiction.

Let us then consider the case $|C^3_1|=q-1$ for $q\geq3$ and assume that some codewords in $C^3_1$ are neighbours. We may assume that $(1,1,1),(1,2,1)\in C^3_1$. Moreover, we may assume that $(q,q,1)\in X^3_1$. Since there are at least two codewords in the pipe $P^2(1,1)$ and there are $q-1$ codewords in $C^3_1$, we have at least two pipes $P^2(a,1)$ and $P^2(q,1)$ such that they contain no codewords. Therefore, we have $(a,q,1)\in X^1_1$. Moreover, we have codeword pipes $P_C^3(q,q)$ and $P_C^3(a,q)$. Now, we can consider layers $L^1_q$ and $L^1_a$. Let us first consider the layer $L^1_q$. First of all, it contains the codeword pipe $P_C^3(q,q)$ and since the pipe $P^2(q,1)$ contains no codewords, there has to be at least one codeword in every pipe $P^3(q,i)$ where $1\leq i\leq q-1$. Indeed, otherwise we would have $q-1$ codewords in some pipe $P_C^1(i,q)$, $2\leq i\leq q$, a contradiction with pipes $P^2(a,1)$ and $P^2(q,1)$ containing no codewords. Therefore, we have $|C^1_q|\geq 2q-2$. Furthermore, we get similarly $|C^1_a|\geq 2q-2$. However, now we have $$|C|\geq 2(2q-2)+\sum_{i=1,i\neq a}^{q-1}|C^1_i|\geq 2(2q-2)+(q-2)(q-1)=q^2+q-2>q^2-1,$$ a contradiction.\end{proof}
\end{theorem}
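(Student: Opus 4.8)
The plan is to prove $\GSLD(K^3_q)=q^2$ by combining the known equality $\gamma^{SLD}(K^3_q)=q^2$ from \cite{CubeCon} with Corollary~\ref{CorollarySLdtoDLD}, which gives the upper bound $\GSLD(K^3_q)\leq\SLD(K^3_q)=q^2$ for free. The real work is the matching lower bound: assuming $C$ is a solid-locating-dominating code with $|C|<q^2$, I would derive a contradiction. The structural backbone is that $|C|<q^2$ forces some layer, say $L^3_1$, to contain at most $q-1$ codewords, and then Lemma~\ref{dominationNumber} ($\gamma(K_q\square K_q)=q$) guarantees a non-dominated vertex, i.e.\ $|X^3_1|\geq 1$. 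The key local observation to extract is that whenever $(a,b,z)$ lies in $X^3_z$, the solid-locating-dominating condition (via Theorem~\ref{ThmCharacterizationSLDILD}) forces the pipe $P^3(a,b)$ to be ``full'' in the sense of containing $q-1$ codewords: if $(a,b,h)\notin C$ for some $h$ then $I(a,b,h)\supseteq I(a,b,z)$, which together with the emptiness of $I(C^3_z;(a,b,z))$ gives $I(a,b,z)\setminus I(a,b,h)=\emptyset$, contradicting solid-location-domination unless only one such vertex in the pipe escapes being a codeword. So I would introduce the notation $P^i_C(a,b)$ for a pipe with $q-1$ codewords and record that each $X^3_z$-vertex produces such a pipe.

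Next I would establish the counting inequality~\eqref{q+2 putkea}: for fixed direction $i$, there can be at most $q+1$ pipes of type $P^i_C(a,b)$, since $q+2$ such pipes would already force $|C|\geq(q+2)(q-1)=q^2+q-2>q^2-1$. With this in hand I would split into cases by how deficient a layer can be. First, the case where some layer $L^3_1$ has only $q-t$ codewords with $t\geq 2$: then $|X^3_1|\geq t^2$ and (a subset of) these non-codewords induce a $K_t\square K_t$, so there are at least $t^2$ codeword pipes $P^3_C(a,b)$, whose coordinates we may normalize to $\{(i,j)\mid 1\leq i,j\leq t\}$. Pushing this through the other layers $L^3_j$, $2\leq j\leq q$, each $C^3_j$ contains an induced $K_t\square K_t$, so Lemma~\ref{f^i_j koko} gives $f^3_j\geq t^2-t$; summing the bound $|X^3_j|\geq(f^3_j+a^3_j)^2\geq f^3_j+a^3_j$ over $j$, using $\sum a^3_j\geq 1-t$, yields $|X^3|\geq q(t^2-t)+1\geq 2q+1$, which contradicts~\eqref{q+2 putkea} because distinct $X^3$-vertices at different heights occupy distinct pipe coordinates. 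Hence every layer has $|C^i_j|\geq q-1$ in every direction.

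The remaining regime is $|C^3_1|=q-1$ exactly, and here I would split on the internal structure of $C^3_1\cong$ a subset of $K_q\square K_q$. If $C^3_1$ induces a discrete graph, then every non-codeword $v=(a,b,1)$ has $|N(v)\cap C^3_1|\leq 2$ with the two codewords in different pipes, so Lemma~\ref{I-set perusom} produces another vertex $w\in L^3_1$ containing $I(C^3_1;v)$ in its $I$-set; avoiding a contradiction forces a codeword in $P^3(a,b)$, and ranging over all $q^2$ positions of $L^3_1$ gives $|C|\geq q^2$. If instead two codewords of $C^3_1$ are adjacent, say $(1,1,1),(1,2,1)\in C$, then since $P^2(1,1)$ carries at least two codewords while $|C^3_1|=q-1$, two columns $P^2(a,1),P^2(q,1)$ are codeword-free; this creates $X^1_1$-vertices, hence codeword pipes $P^3_C(q,q)$ and $P^3_C(a,q)$, and examining layers $L^1_q$ and $L^1_a$ forces $|C^1_q|,|C^1_a|\geq 2q-2$, whence $|C|\geq 2(2q-2)+(q-2)(q-1)=q^2+q-2>q^2-1$. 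I expect the main obstacle to be the bookkeeping in this last adjacency case — correctly tracking which pipes are forced empty and which layers inherit extra codewords — together with cleanly handling the small values $q=2,3$ where some of these layer arguments degenerate; the $t\geq2$ deficiency case and~\eqref{q+2 putkea} are the conceptual heart, but the case analysis near $q-1$ codewords per layer is where a careless argument would slip.
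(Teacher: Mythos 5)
Your proposal is correct and follows essentially the same route as the paper's own proof: the same upper bound via Corollary~\ref{CorollarySLdtoDLD}, the same pipe-counting inequality~\eqref{q+2 putkea}, the same $t\geq2$ deficiency argument through Lemma~\ref{f^i_j koko}, and the same final split into the discrete and adjacent configurations of a layer with exactly $q-1$ codewords. The paper likewise restricts the adjacency case to $q\geq3$, which disposes of the small-$q$ degeneration you flagged.
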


\section{Acknowledgement}
The authors would like to thank Mar\'ia Luz Puertas Gonzalez for fruitful discussions on
the topic.

\bibliographystyle{fundam}
\bibliography{LD}


\end{document}